\documentclass[11pt,a4paper,twoside,openright]{book}
\usepackage[T1]{fontenc}
\usepackage[utf8]{inputenc}
\usepackage{lmodern}
\usepackage[a4paper,textwidth=150mm,textheight=230mm,centering,headheight=14pt,headsep=7mm,footskip=12mm]{geometry}
\usepackage{microtype}
\usepackage{amsmath,amssymb,amsthm}
\usepackage{tikz-cd}
\usetikzlibrary{decorations.pathmorphing}

\usepackage{enumitem,needspace,etoolbox}
\usepackage{titlesec,tocloft,fancyhdr,emptypage}
\usepackage[numbers,sort&compress]{natbib}
\usepackage{xurl}
\usepackage[unicode,hidelinks,bookmarksnumbered,linktoc=all]{hyperref}
\hypersetup{pdftitle={Freely generated categorical structures and automatic differentiation},pdfauthor={Fernando Lucatelli Nunes},pdfsubject={Introduction, conclusion and summaries of the doctoral thesis, Utrecht University, 2026},pdfkeywords={categorical semantics, automatic differentiation, CHAD, Grothendieck constructions}}
\titleformat{\chapter}[display]{\normalfont}{\large\scshape\chaptertitlename\ \thechapter}{12pt}{\Huge\bfseries\raggedright\hyphenpenalty=10000}
\titlespacing*{\chapter}{0pt}{24pt}{30pt}
\titleformat{\section}{\large\bfseries}{\thesection}{0.8em}{}
\titleformat{\subsection}{\normalsize\bfseries}{\thesubsection}{0.8em}{}
\makeatletter
\renewcommand{\chaptermark}[1]{\if@mainmatter\markboth{\thechapter.\ #1}{\thechapter.\ #1}\else\markboth{#1}{#1}\fi}
\makeatother
\renewcommand{\sectionmark}[1]{}
\fancypagestyle{plain}{\fancyhf{}\fancyfoot[LE,RO]{\thepage}}
\newcommand{\op}{\mathsf{op}}
\newcommand{\Vect}{\mathsf{Vect}}
\newcommand{\catM}{\mathsf{Man}}
\newcommand{\catSet}{\mathsf{Set}}
\newcommand{\catTop}{\mathsf{Top}}
\newcommand{\catC}{\mathsf{C}}
\newcommand{\Fam}[1]{\mathsf{Fam}\left(#1\right)}
\newtheorem{theorem}{Theorem}[section]
\newtheorem{lemma}[theorem]{Lemma}

\theoremstyle{remark}

\AtBeginEnvironment{lemma}{\Needspace{8\baselineskip}}

\begin{document}
\frontmatter
\begin{titlepage}
\centering
\vspace*{24mm}
{\Huge\bfseries Freely generated categorical structures and automatic differentiation\par}
\vspace{20mm}
{\Large Fernando Lucatelli Nunes\par}
\vspace{18mm}
{\large Doctoral thesis\par Utrecht University\par}
\vspace{12mm}
{\large Introduction, conclusion and summaries\par}
\vfill
{\large 2026\par}
\vspace{8mm}
{arXiv version\par}
\end{titlepage}

\thispagestyle{empty}
\null\vfill
\noindent Department of Information and Computing Sciences\\
Utrecht University
\par\bigskip
\noindent This thesis was completed with financial support from the NWO Veni grant VI.Veni.201.124 and the ERC project \textsc{FoRECAST}.
\par\bigskip
\noindent Dit proefschrift is tot stand gekomen met financi\"ele steun van de NWO Veni-beurs VI.Veni.201.124 en het ERC-project \textsc{FoRECAST}.
\par\bigskip
\noindent Copyright \textcopyright\ 2026 Fernando Lucatelli Nunes.
\cleardoublepage

\chapter*{About this version}
\phantomsection
\addcontentsline{toc}{chapter}{About this version}
\markboth{About this version}{About this version}
This version contains the introduction, conclusion, and Dutch and English summaries of the doctoral thesis \emph{Freely generated categorical structures and automatic differentiation}. The six research papers forming Chapters~2--7 of the full thesis are not reproduced here; they are available separately at the links below. The original chapter numbering is retained: the introduction is Chapter~1 and the conclusion is Chapter~8. Throughout this version, references to chapters and descriptions of the thesis structure refer to the full thesis. The bibliography contains the works cited in this version.

Chapters~2, 3, 5 and~7 are joint work with Matthijs V\'ak\'ar; Chapter~4 is joint work with Gordon Plotkin and Matthijs V\'ak\'ar; and Chapter~6 is joint work with Rui Prezado and Matthijs V\'ak\'ar. In the joint papers forming Chapters~2--7 of the full thesis, I took a leading role in the mathematical development and preparation of the manuscripts, with particular responsibility for the formulation and proof of the main results. The work was developed in close collaboration with the coauthors named above, through sustained discussion and refinement of the ideas, arguments and exposition.

\section*{Papers in the full thesis}
\begin{enumerate}[start=2,leftmargin=2em,itemsep=7pt,parsep=0pt,topsep=6pt]
\item \emph{Automatic differentiation for ML-family languages: Correctness via logical relations}~\cite{nunes_vakar_2024_ml}.\\
\href{https://arxiv.org/abs/2210.07724v3}{arXiv:2210.07724v3}.
\item \emph{CHAD for expressive total languages}~\cite{nunes_vakar_2021_chad}.\\
\href{https://arxiv.org/abs/2110.00446v2}{arXiv:2110.00446v2}.
\item \emph{Unraveling the iterative CHAD}~\cite{nunes_iterative_chad_2025}.\\
\href{https://arxiv.org/abs/2505.15002v3}{arXiv:2505.15002v3}.
\item \emph{Free Doubly-Infinitary Distributive Categories are Cartesian Closed}~\cite{nunes_vakar_2024_doubly_inf_distrib}.\\
\href{https://doi.org/10.1007/s10485-026-09887-7}{Published article}; \href{https://arxiv.org/abs/2403.10447v10}{arXiv:2403.10447v10}.
\item \emph{Free extensivity via distributivity}~\cite{nunes_extensivity_2025}.\\
\href{https://doi.org/10.4171/PM/2129}{Published article}; \href{https://arxiv.org/abs/2405.02185}{arXiv:2405.02185}.
\item \emph{Monoidal closure of Grothendieck constructions via $\Sigma$-tractable monoidal structures and Dialectica formulas}~\cite{nunes_monoidal_grothendieck_2024}.\\
\href{https://www.tac.mta.ca/tac/volumes/44/35/44-35abs.html}{Published article}; \href{https://arxiv.org/abs/2405.07724v4}{arXiv:2405.07724v4}.
\end{enumerate}
\cleardoublepage
\tableofcontents
\cleardoublepage
\mainmatter
\chapter{Introduction}

This thesis rests on the conviction that theory and practice in computer science should inform one another. As Donald Knuth emphasised in \emph{Theory and Practice}~\cite{knuth1991theorypractice}, theoretical and practical work gain from sustained interaction. Semantic foundations can clarify what an implementation must achieve and provide methods for establishing its correctness. Conversely, the demands of implementation can reveal questions that are not apparent from an abstract formulation alone. The work presented here develops this interaction through automatic differentiation and categorical semantics.

In programming languages, this principle takes a concrete form: practical language features guide the theoretical questions, while semantic frameworks provide a basis for specifying transformations and proving them correct. The resulting constructions may, in turn, suggest improvements to the language or its implementation.

Differentiable programming \cite{baydin2018automatic,griewank2008evaluating} exemplifies this interaction. It plays a substantial role in scientific computing, machine learning, and optimisation. Its central idea is to compute derivatives of the functions denoted by programs, making differentiation available as part of programming itself. At the heart of this approach lies a systematic program transformation implementing the chain rule of calculus: \emph{automatic differentiation} (AD). AD has a long history, from its early development in numerical analysis \cite{griewank2008evaluating} to its reformulation in functional settings and more recent semantic treatments \cite{elliott2018essence,PLOTKU}.

In this thesis, automatic differentiation provides the concrete problem through which we develop and evaluate semantic frameworks for programming languages. We seek a \emph{compile-time} AD transformation for expressive languages, together with a precise semantics and a proof that the generated program computes the derivative of the original. Universal properties determine the transformation from its action on primitives; correctness additionally requires correct primitive derivatives and a semantic argument relating the transformed program to differentiation. This separation makes the construction modular and provides a foundation for subsequent work on optimisation, efficiency, and parallelism.

\section{Reverse-mode Automatic Differentiation}

Automatic differentiation raises a semantic question as well as an algorithmic one: how should a language represent derivatives, and how should differentiation interact with its constructs? Reverse mode makes these questions particularly apparent.

\emph{Reverse-mode} automatic differentiation transforms a program so that, in addition to its original outputs, it computes \emph{co-derivatives}, namely the action of the transpose of the derivative on cotangent vectors. For \(f : \mathbb{R}^m \to \mathbb{R}^n\), a reverse pass computes $Df(x)^t w$ for an output cotangent $w$, whereas a forward pass computes $Df(x)v$ for an input tangent $v$. Recovering the full Jacobian requires $n$ such reverse passes or $m$ forward passes. Reverse mode is therefore particularly attractive when there are many inputs and few outputs, as in the scalar loss functions of machine learning; forward mode is often preferable in the opposite regime. Actual costs also depend on the implementation and its storage requirements \cite{baydin2018automatic,griewank2008evaluating}.

At the same time, reverse-mode AD exposes a persistent gap between practical implementations and semantic foundations. Achieving a correct and efficient source-to-source transformation for expressive higher-order languages requires threading derivative computations through programs in a way that preserves their structure and intent. Widely used frameworks such as TensorFlow \cite{tensorflow2015}, PyTorch \cite{paszke2019pytorch}, and JAX \cite{jax2018github} combine graph representations, tracing, compilation, and runtime support in different ways. The question pursued here concerns the language-level foundations: how to derive a compositional source transformation and prove its correctness for the constructs of an expressive typed language.

This difficulty leads to the central question of the thesis: \emph{can both forward- and reverse-mode AD be formulated as compositional compile-time transformations with correctness proofs} for expressive programming languages supporting inductive types, recursion, non-termination, and other advanced features? We argue that \emph{categorical semantics} provides a suitable framework. Our approach builds on the categorical treatment of forward-mode AD via dual numbers for the simply typed $\lambda$-calculus \cite{DBLP:conf/fossacs/HuotSV20}, Elliott's categorical account of AD~\cite{elliott2018essence}, and V\'ak\'ar's reverse-mode transformation for the simply typed $\lambda$-calculus~\cite{DBLP:conf/esop/Vakar21}, known as \emph{Combinatory Homomorphic Automatic Differentiation} (CHAD).

\section{Categorical Semantics of Programming Languages}

Category theory has long provided a rigorous mathematical foundation for the semantics of programming languages, offering a precise framework for their denotational interpretation. Seminal works such as Lambek and Scott’s introduction to higher-order categorical logic \cite{lambek1986introduction}, Crole’s \emph{Categories for Types} \cite{crole1993categories}, and Jacobs’ \emph{Categorical Logic and Type Theory} \cite{jacobs1999categorical} demonstrate how rich type theories and $\lambda$-calculi can be modelled by categorical structures including cartesian closed categories, indexed categories, and Freyd categories.

From the outset, categorical methods have profoundly influenced both the theory and practice of programming languages. The following examples highlight some of the landmark developments:  
\begin{itemize} 
	\item Moggi's use of \emph{monads} in programming-language semantics provided a unifying account of computational effects \cite{moggi1989computational,moggi1991notions}, including state, exceptions, nondeterminism, partiality, continuations, probabilistic choice, and I/O. Since then, monads have become a standard tool in semantic modelling and in functional programming, most notably through their adoption in Haskell. 
	\item Fiore and Plotkin’s axiomatic domain theory identified categorical conditions, such as algebraic compactness, that solve recursive domain equations and thereby deliver a uniform treatment of recursive types and domains \cite{fiore1996axiomatic}. 
	\item Plotkin and Power’s theory of \emph{algebraic effects} refined the monadic approach through algebraic theories; subsequent work on effect handlers further developed this modular treatment of computational effects.  
	\item Levy's \emph{Call-by-Push-Value} (CBPV) analyses call-by-name and call-by-value within a common calculus of values and computations, with adjunctions playing a central role in its semantics \cite{levy1999call,levy2006cbpv}.
\end{itemize}

In this tradition, the aim of this thesis is to develop categorical methods that inform the design of program transformations and support proofs of their correctness. The resulting account of AD also raises questions of wider interest in categorical semantics, particularly concerning the structure of free constructions and indexed models.

On the one hand, categorical insights yield structure-preserving AD transformations with transparent correctness proofs. On the other hand, the demands of real implementations expose the limitations of existing frameworks and stimulate the development of richer theories that remain accountable to practice. Throughout this thesis, AD thus serves simultaneously as a stress test and a catalyst: it presses semantic frameworks to their limits while also benefiting from the conceptual clarity and guarantees they provide.

\section{Functoriality}

We now cast automatic differentiation in categorical terms. This is natural, since the chain rule ensures that differentiation respects identities and composition, which is the hallmark of functoriality. A categorical viewpoint sharpens both the specification and the implementation of AD: it enforces modularity by decomposing programs along categorical structure, clarifies semantics by presenting AD as a structure-preserving functor, and supports efficiency by enabling principled reuse and optimisation. Accordingly, if differentiation is to function as a reliable program transformation, it should arise as a functor that preserves relevant categorical structure.

From now on, we work in the setting of differentiable manifolds and maps between them, the standard framework for studying differentiation in classical mathematics. For readers unfamiliar with manifolds, it is enough to think of Euclidean spaces and their open subsets, as typically encountered in an introductory or intermediate course in calculus. Since manifolds are seldom implemented directly in practice, these simpler settings are often sufficient for reasoning about modern differentiable programming. Nevertheless, manifolds also admit useful presentations by idempotents. Every finite-dimensional smooth manifold without boundary, assumed Hausdorff and second countable, can be embedded in a Euclidean space and realised as a smooth retract of an open neighbourhood. Such a presentation consists of an open set $U$ and a smooth idempotent $e:U\to U$. Maps between the images of two such idempotents are represented by smooth maps
\[
f:U\longrightarrow U',\qquad f=e'\circ f\circ e.
\]
We write $\catM$ for the category of finite-dimensional smooth manifolds and smooth maps, using the convention that each manifold has a fixed dimension. Accordingly, when using idempotent presentations we restrict to those whose images have a fixed dimension.

Assuming that $\catM$ models the phenomena of interest, a program transformation should be a structure-preserving functor that assigns to each differentiable map its (co-)derivative.  Since AD relies on the chain rule, it ought to behave functorially. More precisely, at its most basic, the co-derivative defines a functor
\begin{equation}\label{Differentiation-BASIC}
	\mathfrak{D}^t : \mathsf{1}\downarrow \catM \;\longrightarrow\; \Vect^{\op}
\end{equation}
where $\mathsf{1}$ denotes the one-point manifold and $\mathsf{1}\downarrow \catM$ is the coslice category of pointed manifolds and point-preserving smooth maps; see standard references on comma categories such as \cite[Ch.~II.6]{MacLane1998} or \cite[Sec.~2.3]{AdamekHerrlichStrecker2006}. Dually, the derivative is also a functor $\mathfrak{D} : \mathsf{1}\downarrow \catM \to \Vect$.

In practice, Euclidean spaces are implemented without base points. This motivates reframing differentiation as a functor
\begin{equation}\label{eq:derivative-functor-question}
	\mathsf{D} : \catM \;\longrightarrow\; \catC .
\end{equation}
The central question then becomes: what is the appropriate target category $\catC$? Two main paradigms, relying on distinct answers to this question, have been developed within the categorical semantics of AD: the dual numbers approach and CHAD; see Chapters~\textbf{2} and \textbf{3}, respectively.

As explained later, dual numbers offer a principled approach to forward-mode AD. They also admit reverse-mode formulations, studied in Chapter~2. CHAD makes a different choice of target structure, representing the primal value and its associated cotangent space separately; this makes the dependence of cotangents on primal values explicit.

In CHAD, the cotangent spaces over all points of a manifold are collected into a family. The functor \eqref{Differentiation-BASIC} thereby induces
\begin{equation}
	cha\mathfrak{D} : \catM \;\longrightarrow\; \mathsf{Fam}(\Vect^{\op}),
\end{equation}
where $\mathsf{Fam}(\Vect^{\op})$ is the free coproduct completion of $\Vect^{\op}$ (see \cite{AdamekFreeCoproductCompletion}). Concretely, $cha\mathfrak{D}$ maps a manifold $M$ to the pair $(M, TM)$, where $M$ denotes the underlying set of $M$ (by slight abuse of language), and $TM : M \to \Vect^{\op}$, with the indexing set viewed as a discrete category, assigns to each $x\in M$ the cotangent space at $x$. A morphism $f : M \to N$ in $\catM$ is mapped to the pair $(f,f')$, where $f$ is the underlying function and $f'$ is the family of linear maps
\[
\big(f'_x : TN(f(x)) \to TM(x)\big)_{x\in M},
\]
representing the respective co-derivatives.
Since $\mathsf{Fam}(\Vect^{\op})$ models the category of containers for suitable dependently typed languages with linear types (see Chapters~3 and~4; cf.\ containers \cite{AbbottAltenkirchGhaniFoSSaCS03}), this viewpoint underlies the CHAD construction introduced in~\cite{DBLP:conf/esop/Vakar21}. Chapters~3 and~4 develop this functorial account, while Chapter~2 treats dual numbers.

\subsection{Structure-preserving functoriality}

Program transformations that are \emph{structure-preserving} admit systematic methods of reasoning. They align with the universal properties that characterise the syntactic and semantic structures of programming languages (modulo $\beta\eta$-equivalence) and support proof techniques such as logical relations. These methods organise the correctness proof around the structure of the language and the behaviour of its primitives.

Returning to the functor \eqref{eq:derivative-functor-question}, it is not enough for differentiation to be merely functorial. Both practical and theoretical considerations point to a stronger requirement: the functor must also be \emph{structure-preserving}, acting as a morphism of the categorical structure we care about. This requirement is not incidental but reflects a general methodological principle that extends beyond differentiation to program transformations more broadly.

Differentiation is an especially suitable test case for this principle, since it is already essentially structure-preserving. The task, then, is to make this precise by establishing a functorial differentiation that preserves the relevant categorical structure. Concretely, we show:

\begin{theorem}
	The functor $cha\mathfrak{D} : \catM \longrightarrow \mathsf{Fam}(\Vect^{\op})$ preserves finite products.
\end{theorem}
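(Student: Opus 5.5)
The plan is to first describe finite products in $\Fam{\Vect^{\op}}$ explicitly, and then check that $cha\mathfrak{D}$ sends the product cone in $\catM$ to such a product cone. Since $\Fam{\Vect^{\op}}$ is the free coproduct completion, products in it are computed as follows. A product of $(I,X)$ and $(J,Y)$ is $(I\times J, Z)$ with $Z(i,j)=X(i)\times Y(j)$, the product taken in $\Vect^{\op}$. In $\Vect$ this is the biproduct $X(i)\oplus Y(j)$, so it is a product in $\Vect^{\op}$ as well. The projections are pairs $(\pi_1,\iota_1)$ and $(\pi_2,\iota_2)$. Here $\pi_k$ are the set projections, and the $\iota_k$ are the linear coproduct injections $X(i)\to X(i)\oplus Y(j)$ and $Y(j)\to X(i)\oplus Y(j)$, read as morphisms of $\Vect^{\op}$ pointing the other way. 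The terminal object is $(\{\ast\},0)$. I will verify the universal property directly. A pair of maps $(f,f')\colon(K,W)\to(I,X)$ and $(g,g')\colon(K,W)\to(J,Y)$ corresponds uniquely to $(\langle f,g\rangle, h')$, where $h'_k\colon X(f k)\oplus Y(g k)\to W(k)$ is the copairing $[f'_k,g'_k]$. Uniqueness follows from the universal property of the coproduct in $\Vect$, taken fibrewise.

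Next I apply $cha\mathfrak{D}$ to the product $M\times N$ in $\catM$. Its underlying set is the cartesian product of the underlying sets. For each $(x,y)$, the standard canonical isomorphism of cotangent spaces
\[
T^*_{(x,y)}(M\times N)\;\cong\;T^*_xM\oplus T^*_yN
\]
is induced by the co-derivatives of the two projections. The key step is to check that the comparison map $(\mathrm{id}, \phi)$ is exactly this isomorphism, fibrewise. Here $\phi_{(x,y)}=[(\pi_1)'_{(x,y)},(\pi_2)'_{(x,y)}]$, the map induced by $cha\mathfrak{D}(\pi_1)$ and $cha\mathfrak{D}(\pi_2)$. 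Concretely, $(\pi_1)'_{(x,y)}$ is the pullback $d\pi_1^*\colon T^*_xM\to T^*_{(x,y)}(M\times N)$, and similarly for $\pi_2$. Then $\phi$ is $(\alpha,\beta)\mapsto \alpha\circ d\pi_1+\beta\circ d\pi_2$. This is invertible because $T_{(x,y)}(M\times N)\cong T_xM\oplus T_yN$ via $(d\pi_1,d\pi_2)$. That fact can be checked in charts, where the product chart makes it the statement $\mathbb{R}^{m+n}=\mathbb{R}^m\oplus\mathbb{R}^n$. Dualising a finite-dimensional direct-sum decomposition gives the cotangent statement.

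For the terminal object, the one-point manifold $\mathsf{1}$ has a single point with cotangent space $0$. So $cha\mathfrak{D}(\mathsf{1})=(\{\ast\},0)$, which is terminal. Preservation of binary and nullary products gives preservation of all finite products. A morphism in $\Fam{\Vect^{\op}}$ whose set component is a bijection and whose fibre components are isomorphisms is itself an isomorphism, so the comparison map is an isomorphism.

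I expect the main obstacle to be the bookkeeping of variance, not any analytic difficulty. Concretely, one must ensure that the fibrewise comparison map produced by the universal property in $\Vect^{\op}$ really is the copairing of the pullbacks along the projections, and therefore the canonical isomorphism. I would handle this by writing everything in $\Vect$ with arrows reversed. Tangent-level facts reduce to the chart computation, together with the functoriality of $cha\mathfrak{D}$ guaranteed by the chain rule.
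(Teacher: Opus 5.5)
Your proposal is correct and follows essentially the same route as the paper. Both arguments use the fibrewise decomposition $T^*_{(x,y)}(M\times N)\cong T^*_xM\oplus T^*_yN$, read the direct sum as a product in $\Vect^{\op}$ over the product of index sets, identify the cotangent maps of the projections with the direct-sum injections, and note that the point has zero cotangent space. Your version additionally spells out the universal property in $\Fam{\Vect^{\op}}$ and checks that the comparison map is this canonical isomorphism.
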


\begin{proof}
The cotangent space of a product has the canonical decomposition
\[
T^*_{(x,y)}(M\times N)\cong T_x^*M\oplus T_y^*N.
\]
The direct sum is a coproduct in $\Vect$, hence a product in $\Vect^{\op}$. Together with the product of the indexing sets, this is precisely the product of families. The cotangent maps of the manifold projections are the corresponding direct-sum injections, and the cotangent space of the one-point manifold is zero, giving the terminal family.
\end{proof}

\subsection{Forward CHAD and dual numbers}

The relation between forward CHAD and dual-numbers AD is especially transparent at first-order types. Let $\Vect$ denote the category of real vector spaces and linear maps, let $U:\Vect\to\catSet$ be the underlying-set functor, and consider the total-elements functor
\[
Q:=\Sigma\circ\mathsf{Fam}(U):\mathsf{Fam}(\Vect)\longrightarrow\catSet,
\qquad Q(A,V)=\coprod_{a\in A}U(V_a).
\]
Here $\Sigma:\mathsf{Fam}(\catSet)\to\catSet$ takes the disjoint union of a family. Thus $Q$ retains both the primal value and a tangent vector; it is not the projection $(A,V)\mapsto A$. On a family morphism $(f,\varphi):(A,V)\to(B,W)$ it acts by
\[
Q(f,\varphi)(a,v)=\bigl(f(a),\varphi_a(v)\bigr).
\]
The products of families have fibres $V_a\times W_b$, while coproducts concatenate families. Consequently, $Q$ preserves finite products and coproducts, including their nullary cases: the terminal family is $(1,0)$ and its total set is a singleton.

For the simply typed total source language with products, variants, and function types, the forward CHAD interpretation gives a composite
\[
\mathsf{Syn}\xrightarrow{\;\overrightarrow{\mathsf{CHAD}}\;}
\mathsf{Fam}(\Vect)\xrightarrow{\;Q\;}\catSet.
\]
On a ground type $\mathbb{R}^n$, forward CHAD assigns the constant tangent family $(\mathbb{R}^n,\underline{\mathbb{R}^n})$. For a smooth primitive $f:\mathbb{R}^n\to\mathbb{R}^m$, the composite therefore gives
\[
(x,\dot x)\longmapsto\bigl(f(x),D f(x)\dot x\bigr).
\]
This is precisely the forward dual-numbers interpretation. Since both interpretations preserve finite products and coproducts and agree on the generators, they agree, up to the canonical comparison isomorphisms, on the first-order fragment generated by these operations:
\[
\begin{tikzcd}[column sep=large,row sep=large]
\mathsf{Syn}_{\mathrm{fo}}
  \arrow[r,"{\overrightarrow{\mathsf{CHAD}}}"]
  \arrow[dr,swap,"{\mathsf{D}_{\mathrm{dual}}}"]
& \mathsf{Fam}(\Vect)\arrow[d,"Q"]\\
& \catSet .
\end{tikzcd}
\]
At real inputs and outputs, the correctness theorems also identify their results for programs that use higher-order constructs internally: both compute the same primal result and directional derivative.

The distinction arises at function types. Although $\mathsf{Fam}(\Vect)$ is cartesian closed, $Q$ does not preserve exponentials. A small example makes the obstruction explicit. Put $X=(1,\mathbb{R})$. The exponential $X^X$ has one shape for each linear map $\mathbb{R}\to\mathbb{R}$, and a copy of $\mathbb{R}$ over each shape. Hence the canonical comparison
\[
Q(X^X)\longrightarrow Q(X)^{Q(X)}
\]
identifies a pair $(a,b)\in\mathbb{R}^2$ with the affine function $v\mapsto av+b$; it does not include, for example, $v\mapsto v^2$. The same obstruction occurs on the actual ground-type interpretation $A=(\mathbb{R},\underline{\mathbb{R}})$: the comparison for $A^A$ contains only functions of the form $(x,v)\mapsto(f(x),a_xv+b_x)$, and therefore misses $(x,v)\mapsto(x,v^2)$. Thus the composite $Q\circ\overrightarrow{\mathsf{CHAD}}$ preserves the first-order product-and-variant structure but is not a cartesian closed interpretation into $\catSet$. Forward CHAD and dual numbers share their first-order derivative behaviour while organising higher-order values differently.

\section{Expressiveness}

Automatic differentiation raises several technical challenges, including the \emph{if-problem} \cite{DBLP:journals/pacmpl/MazzaP21,joss1976algorithmisches}. At a deeper level, however, the central difficulties stem from questions of \emph{expressiveness}. To support realistic programming languages, differentiation must extend beyond its classical setting so that it applies coherently to richer type systems and effects, remains consistent with the chain rule, and agrees with the classical notion on data types. Categorical semantics provides a natural framework in which to address these requirements.

The gap becomes apparent when classical differentiation is contrasted with the requirements of modern programming languages. Classical differentiation is formulated within concrete mathematical categories that lack cartesian closure and do not accommodate non-termination. Functional languages, by contrast, support currying (cartesian closedness), general recursion and iteration (and therefore non-termination), as well as complex types such as recursive types.

Consider, for example, the classical category of manifolds and differentiable maps, $\catM$. This category already fails to capture basic programming constructs. It does not support variant types: if one attempts to define
\[
\texttt{data K = Either } \mathbb{R}^n\; \mathbb{R}^m,
\]
for $n\ne m$, the intended disjoint union $\mathbb{R}^n + \mathbb{R}^m$ is not an object of $\catM$ under our fixed-dimension convention. Categorically, one would like $K$ to satisfy
\[
\catM(K,-) \;\cong\; \catM(\mathbb{R}^n,-)\times \catM(\mathbb{R}^m,-),
\]
but such coproducts do not exist in $\catM$: the two summands would have to be open submanifolds of $K$, contrary to the fixed-dimension convention. Even these simple variant types therefore require a broader setting.

A natural extension is to allow formal families of manifolds, whose components may have different dimensions, by taking the free coproduct completion
\[
\mathsf{VMan} \;:=\; \mathsf{Fam}(\catM).
\]
If $\catC$ has small coproducts, the universal property of the free coproduct completion extends the (co-)differentiation functor \eqref{eq:derivative-functor-question} uniquely up to canonical isomorphism to a coproduct-preserving functor. If the original functor preserves finite products and finite products in $\catC$ distribute over small coproducts, this extension is bicartesian:
\[
\mathsf{D} : \mathsf{VMan}\longrightarrow \catC ,
\]
which we continue to denote by $\mathsf{D}$. In the case of CHAD, this 
amounts to
\[
cha\mathfrak{D} : \mathsf{VMan} \;\longrightarrow\; \mathsf{Fam}(\Vect^{\op}),
\]
which is a bicartesian functor.
This extension better aligns categorical semantics with programming constructs. 

Nonetheless, $\mathsf{VMan}$ still falls short of the expressiveness required by realistic programming languages. For instance, it lacks a general account of partiality and unbounded iteration, as well as function types and general coinductive types. We briefly illustrate these shortcomings in the subsequent discussion.

\paragraph{Higher-order functions.}
In functional programming, functions are first-class values: they can be passed as arguments, returned as results, and composed into new functions. For instance,
\[
\mathsf{applyTwice} : (A \to A) \to (A \to A), \qquad \mathsf{applyTwice}(f)(x) = f(f(x)),
\]
takes a function $f$ and produces a new one. Categorically, this requires \emph{exponential objects}, representing maps from $A$ to $B$ by an object $B^A$ of the category. Such higher-order structure is central to programming languages but absent from both $\catM$ and $\mathsf{VMan}$ by Lemma \ref{rem:VMAN}.

\begin{lemma}\label{rem:VMAN}
	Neither the category $\catM$ nor the category $\mathsf{VMan}$ is cartesian closed.	
\end{lemma}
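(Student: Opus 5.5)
The plan is a dimension-counting argument. Suppose $\catM$ were cartesian closed and let $E$ be an exponential $\mathbb{R}^{\mathbb{R}}$, with evaluation map $\mathrm{ev}:E\times\mathbb{R}\to\mathbb{R}$. By the fixed-dimension convention, $E$ has some dimension $d$. I will show that for every $n$ there is a smooth map $\mathbb{R}^n\to E$ with injective differential at some point. This forces $n\le d$ for all $n$, which is impossible.

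\textbf{Construction for $\catM$.} For each $n$, consider the smooth map
\[
p_n:\mathbb{R}^n\times\mathbb{R}\to\mathbb{R},\qquad (t,x)\mapsto \textstyle\sum_{i=0}^{n-1} t_i x^i .
\]
Its transpose $\lambda p_n:\mathbb{R}^n\to E$ is a morphism of $\catM$, hence smooth. Choose distinct reals $x_1,\dots,x_n$. For each $j$, the map $e_j:E\to\mathbb{R}$, $e_j(\phi)=\mathrm{ev}(\phi,x_j)$, is smooth, since it is $\mathrm{ev}$ composed with $(\mathrm{id},\mathrm{const}_{x_j})$. Put $e=(e_1,\dots,e_n):E\to\mathbb{R}^n$. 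By the universal property of the exponential, $\mathrm{ev}\circ(\lambda p_n\times\mathrm{id})=p_n$. Hence $e\circ\lambda p_n$ is the linear map $t\mapsto Vt$, where $V$ is the Vandermonde matrix of the $x_j$; it is invertible.

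By the chain rule, $De\circ D(\lambda p_n)$ is then an isomorphism $\mathbb{R}^n\to\mathbb{R}^n$ at every point. So $D(\lambda p_n)_t:\mathbb{R}^n\to T_{\lambda p_n(t)}E$ is injective, and therefore $n\le d$. Letting $n>d$ gives the contradiction.

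\textbf{Extension to $\mathsf{VMan}$.} Here an exponential is a family $(E_k)_{k\in K}$ whose components may have different dimensions, so one more step is needed. A morphism from a singleton family such as $\mathbb{R}^n$ or $\mathsf{1}$ into a family picks out a single index $k$ together with a smooth map into $E_k$. The product of singleton families is again a singleton family, so the transpose $\lambda p_n$ lands in a single component $E_{k_n}$, and the argument above runs inside $E_{k_n}$.

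The potential obstacle is that $k_n$ might vary with $n$, so that the dimensions are unbounded. This is handled by the zero function. Every $\lambda p_n$ sends $t=0$ to the point classified by the map $\mathsf{1}\to E$ that transposes $0:\mathsf{1}\times\mathbb{R}\to\mathbb{R}$. Since $\lambda p_n\circ 0=\lambda(p_n\circ(0\times\mathrm{id}))=\lambda(0)$ by naturality of transposition, this point is the same for all $n$. It lies in one fixed component $E_{k_0}$, so $k_n=k_0$ for all $n$. Then $n\le\dim E_{k_0}$ for every $n$, which is again a contradiction.

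I expect this component-bookkeeping step to be the only real subtlety. The remaining checks are routine: that products in $\mathsf{VMan}$ are computed componentwise, and that $e$ is a morphism of $\mathsf{VMan}$.
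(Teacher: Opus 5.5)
Your proof is correct and is essentially the paper's argument. The paper also transposes an interpolating polynomial family $p_n$ to obtain $s_n:\mathbb{R}^n\to E$, composes with evaluation at $n$ nodes, differentiates to embed $\mathbb{R}^n$ into a tangent space of $E$, and uses $s_n(0)=z$ (the transpose of the zero function) to force every $s_n$ into one component in $\mathsf{VMan}$. The differences are cosmetic. You use the monomial basis, so the composite is the invertible Vandermonde matrix rather than the identity that the paper's Lagrange basis at $1,\ldots,n$ gives. You also invoke the common point $z$ only for $\mathsf{VMan}$, where the paper uses it for both categories.
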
 	

\begin{proof}
Suppose that $E=\mathbb{R}^{\mathbb{R}}$ exists in either category, interpreting $\mathbb{R}$ as a singleton family in $\mathsf{VMan}$. Let $z:1\to E$ represent the zero function. For every $n\geq1$, polynomial interpolation gives a smooth map
\[
p_n:\mathbb{R}^n\times\mathbb{R}\longrightarrow\mathbb{R},\qquad
p_n(x,t)=\sum_{k=1}^n x_k\prod_{\substack{1\leq j\leq n\\j\ne k}}\frac{t-j}{k-j}.
\]
Its transpose $s_n:\mathbb{R}^n\to E$ is a section of the evaluation map $r_n:E\to\mathbb{R}^n$ at $1,\ldots,n$. Moreover, $s_n(0)=z$ for every $n$. In $\mathsf{VMan}$, a morphism from a singleton family selects one component of its codomain, so every $s_n$ selects the component containing $z$. Differentiating $r_n s_n=1$ at zero therefore shows that $\mathbb{R}^n$ is a retract of the same finite-dimensional tangent space $T_zE$. This space would have dimension at least $n$ for every $n$, a contradiction.
\end{proof}

\paragraph{Iteration.}
Real programs also rely on (general) recursion or iteration. Intuitively, iteration repeatedly applies a step until termination. Formally, given
\[
f : A \to B + A ,
\]
where $f$ either produces an output in $B$ or returns a new state in $A$, the \emph{iteration operator} yields a partial function
\[
\mathsf{it}(f) : A \rightharpoonup B ,
\]
defined by
\[
\mathsf{it}(f)(x) = b \quad \text{if some } n \geq 1 \text{ satisfies } \bar f^{\,n}(\iota_2 x)=\iota_1 b,
\]
where $\bar f=[\iota_1,f]:B+A\to B+A$, and the result is undefined otherwise. This formalises looping until a result is returned or divergence occurs (see Chapters~2 and~4).

For example, the factorial function can be expressed via iteration. We take $A=\mathbb{N}\times\mathbb{R}$, with the natural number stored as a discrete counter, and let $B=\mathbb{R}$. The state $(n,a)$ records the remaining counter and the accumulated product. These are objects of $\mathsf{VMan}$ by representing $A$ as a countable family of copies of $\mathbb{R}$. We define
\[
f(n,a) \;:=\;
\begin{cases}
	\iota_1(a) & \text{if } n = 0, \\
	\iota_2(n-1,\, a \cdot n) & \text{if } n > 0 ,
\end{cases}
\]
where $\iota_1,\iota_2$ are coproduct injections. Then $g:\mathbb{N}\to\mathbb{R}$ is defined by
\[
g(n) = \mathsf{it}(f)(n,1)=n!.
\]
\paragraph{General recursion.}
General recursion can be expressed by a \emph{fixpoint operator}
\[
\mathsf{fix}_A : (A \to A) \to A, 
\qquad 
\mathsf{fix}_A(f) = f\big(\mathsf{fix}_A(f)\big).
\]
Intuitively, $\mathsf{fix}_A(f)$ denotes an element of $A$ that is a fixed point of $f$. Such an operator is not definable in $\mathsf{VMan}$, since function types are absent, nor in the simply typed $\lambda$-calculus, which requires additional structure. A standard semantics uses a domain-theoretic setting with least fixed points, thereby accounting for non-termination. A familiar example is the factorial function:
\[
\mathsf{fact}(n) \;=\; 
\begin{cases}
	1 & \text{if } n = 0, \\
	n \cdot \mathsf{fact}(n-1) & \text{if } n > 0,
\end{cases}
\]
which unfolds through recursive calls until the base case is reached. The availability of $\mathsf{fix}$ allows such definitions to be expressed uniformly as $\mathsf{fact} = \mathsf{fix}(F)$, where $F$ maps a candidate $g$ to the function
\[
n \;\mapsto\; \mathbf{if}\; n=0\;\mathbf{then}\;1\;\mathbf{else}\; n \cdot g(n-1).
\]

\paragraph{Recursive types.}
Programming languages also depend on types defined recursively. The most basic examples are \emph{inductive types}, such as natural numbers, lists, and trees, which arise as initial algebras of polynomial functors. For instance, lists over a set $A$ are described by the type equation
\[
\mathsf{List}(A) \;\cong\; 1 \,+\, A \times \mathsf{List}(A),
\]
where $1$ represents the empty list and $A \times \mathsf{List}(A)$ represents cons-cells. Dually, \emph{coinductive types} such as streams are defined as final coalgebras, for example
\[
\mathsf{Stream}(A) \;\cong\; A \times \mathsf{Stream}(A),
\]
capturing potentially infinite sequences. 

In realistic programming, recursion in functions and recursive types often appear together, giving rise to rich structures essential for practical computation. Categorically, this requires objects $K$ defined as solutions to recursive domain equations, so that morphisms $K \to M$ can themselves be specified recursively. Such objects often do not exist in $\mathsf{VMan}$, which therefore lacks the expressive power needed to model these fundamental constructions.

\paragraph{Conclusion.}
From a programming perspective, $\mathsf{VMan}$ lacks several standard constructs and thus cannot serve as a faithful model of real languages. To capture these features adequately, one must move beyond such concrete categories.

\section{Freely generated categorical structures}
One possible approach is to embed $\catM$ into a more expressive category by a completion or another free construction. Such models can be useful, but our immediate objective is a program transformation on an expressive language. Its correctness should recover classical differentiation at the types where that comparison is available, including for programs that use richer constructs internally. This calls for a treatment of syntax and its interpretations, rather than a commitment to a single concrete extension of $\mathsf{VMan}$.

We therefore work at the level of \emph{programming languages modulo $\beta\eta$}, viewing a language as a \emph{freely generated categorical structure} on its primitive types and operations. The universal property specifies the AD macro, while suitable semantic models and logical relations establish its agreement with differentiation.

This stance keeps us focused on the real task: devising transformations from universal properties and proving correctness by structural methods (e.g.\ logical relations), rather than through case analyses tied to a particular category.

\paragraph{Example: simply typed $\lambda$-calculus.}
Given a set $X$ of primitive types and a set $\mathbf{Op}$ of primitive operations (e.g.\ $\mathbf{op}\!: A\times B\to C$), we obtain a typed operation signature
\[
TX \longleftarrow \mathbf{Op} \longrightarrow X
\]
in $\catSet$, where $TX=\coprod_{n\geq0}X^n$ is the set of finite lists of primitive types, recording the ordered input types of an operation. More general operation signatures allow arbitrary type expressions. The free cartesian closed category on the signature presents the simply typed $\lambda$-calculus via the Curry--Howard--Lambek correspondence: tuples and function types arise as products and exponentials.

Grounded in two-dimensional universal algebra, the same construction extends to richer languages. Reasoning about programs modulo $\beta\eta$ becomes reasoning in their freely generated structure. Two features are essential: (i) the structure reflects the language’s expressive power through familiar categorical semantics, and (ii) the universal property ensures that \emph{structure-preserving functors}, the semantics of homomorphic program transformations, are uniquely determined by their action on primitives.

\paragraph{Working hypothesis.}
Freely generated structures provide a principled bridge between abstract (categorical) semantics and implementation concerns. This thesis substantiates that claim: treating programming languages modulo $\beta\eta$ as free categorical structures produces principled, practically relevant, and verifiable program transformations while enriching categorical semantics.

\paragraph{AD via freeness.}
In particular, modelling AD as a homomorphic transformation on the programming language modulo $\beta\eta$  yields principled, modular, and correct AD. Correctness follows from universal properties and categorical logical relations, and optimisation opportunities emerge more clearly. In this way, AD serves as a sustained test of the methodology, aligning categorical semantics with real computational demands.

Finally, instead of enlarging concrete models piecemeal, we formulate differentiation \emph{within} the appropriate free structures. This ensures compatibility with standard models whenever applicable, while supporting the expressiveness required by modern programming languages.

\section{AD from universal properties}

The guiding principle of this thesis is that differentiation should be \emph{structure-preserving}. This perspective is consistent with classical mathematics, where tangent bundles and dual numbers exhibit functorial behaviour, and extends naturally to expressive programming languages when syntax is treated as free categorical structure.

Formally, let a language be modelled by a categorical structure $\mathcal{C}$ (e.g.\ a cartesian closed, indexed, or Freyd category). The programming language modulo $\beta\eta$ is then the freely generated $\mathcal{C}$-structure on a set of primitives/signatures. Any uniform transformation that respects this structure corresponds to a \emph{structure-preserving functor}/morphism, uniquely determined by its action on primitives. Instead of specifying AD via \textit{ad hoc} syntactic rules, we characterise it as the unique homomorphic extension induced by suitable semantic data.

This framework encompasses both forward- and reverse-mode AD.  
For forward mode, we recover dual numbers by defining a structure-preserving functor from the syntactic category of a simply typed higher-order source language to that of an extended target language, sending each primitive operation to code for its value and derivative action. The functor extends homomorphically to all programs, and correctness is established via a categorical logical-relations argument that accommodates recursive types and partial primitives.

CHAD provides a second approach to both forward- and reverse-mode AD. We develop \textbf{CHAD} (\emph{Combinatory Homomorphic AD}) for a higher-order total language with rich data types (products, sums, inductive and coinductive types). The semantics is expressed via indexed categories and their Grothendieck constructions, which accumulate derivative information through data projections. To handle sums systematically we introduce \emph{extensive indexed categories}, a fibred analogue of extensive categories. The reverse transformation is then given by a homomorphic functor into a Grothendieck construction, with correctness following from universal properties and refined logical relations.

We further extend the approach to first-order languages with partial operations and general while-loops, yielding \emph{Iterative CHAD}. Here, \emph{iteration-extensive indexed categories} relate substitution along a loop to parameterised initial algebras of substitution along its body. Their folds supply iteration in the category of containers; \emph{iterative folders} isolate the operations required by the target syntax. The transformation remains structure-preserving, and its correctness is proved compositionally.

\medskip
\noindent\textbf{The methodological connection.}
Throughout, language features determine the categorical structure to be preserved. Universal properties specify the transformations, and semantic comparisons establish their correctness.
\section{Free extensivity and distributivity}

Throughout this research we have emphasised that freely generated categorical structures are central to categorical semantics: a programming language modulo $\beta\eta$ can be seen as a \emph{freely generated categorical structure}. Concrete models then appear as non-free algebras, with syntax represented by the free counterpart. This perspective has long been fruitful in program semantics, and it has proven especially powerful in the development of AD.

Pursuing AD from this vantage point raised questions that extend beyond the applied setting into foundational aspects of category theory. In formulating AD as a homomorphic transformation, we encountered structural requirements on free completions under products and coproducts. Within two-dimensional universal algebra, this prompted a systematic study of the free completions under small products and small coproducts (Chapter~5), followed by analogous constructions for classes of limits (Chapter~6). A canonical pseudodistributive law between the product and coproduct completions yields a pseudomonad whose pseudoalgebras are the \emph{doubly-infinitary distributive categories}.

Concretely, a category $\catC$ with small products and small coproducts is doubly-infinitary distributive when the canonical functor
\[
\Sigma \colon \Fam{\catC} \to \catC
\]
preserves small products, where $\Fam{\catC}$ denotes the free coproduct completion of $\catC$. This condition is strictly stronger than ordinary infinitary distributivity. For instance, $\catTop$ is infinitary distributive but not doubly-infinitary. Moreover, although coproducts together with cartesian closedness imply infinitary distributivity, this implication fails at the doubly-infinitary level: the category of pseudotopological spaces $\mathsf{PsTop}$ is complete, cocomplete, and cartesian closed (indeed a quasitopos), yet the inclusion $\mathsf{Top}\to \mathsf{PsTop}$ preserves both coproducts and products, showing that $\mathsf{PsTop}$ is not doubly-infinitary distributive. Chapter~5 develops these comparisons in detail.

Our main focus, however, lies on the freely generated structures themselves. We prove that free doubly-infinitary distributive categories are cartesian closed: an unexpected result in which function types arise purely from freeness, becoming an \emph{admissible} property (alongside, for instance, extensivity). Although initially motivated by the applied challenge of differentiable programming and optimisation issues in CHAD, this theorem is of independent interest: it clarifies the interplay between distributivity and two-dimensional monad theory, underpins later developments on free extensive categories (Chapter~6), and points toward new connections between freely generated structures, function spaces, and categorical accounts of defunctionalisation (left to future work).

Thus, a concrete problem in automatic differentiation gave rise to new results in pure category theory, which in turn enrich the research program on homomorphic program transformations. The broader lesson is that categorical investigations inspired by AD are not confined to differentiation alone but illuminate the semantics of programming languages and can be directly applied to a wide range of principled, structure-preserving transformations.
\section{Grothendieck constructions}

A second strand of foundational contributions concerns \emph{lifting structure through Grothendieck constructions}.  
The guiding question is how much of the structure of a base category can be transported along a fibration to its total category.  
Within CHAD this issue became central when analysing the behaviour of exponentials, both cartesian and monoidal, as well as coproducts, initial algebras, terminal coalgebras, and even iteration, of the Grothendieck constructions of indexed categories.  
A significant portion of the theory needed for CHAD reduces to clarifying these aspects of fibred category theory, which has to do with completion aspects of the $2$-category of indexed categories.

These lifting questions are equally relevant to logical relations methods.  
The categorical logical relations technique can be viewed as a semantics obtained by transporting the structure of a base model into a category of predicates (see, for instance, \cite{nunes_vakar_mfo_4046}).  
Constructing such relational models depends on principled ways of lifting limits, colimits, initial algebras, monads, and exponentials along functors that are often fibrations.

To treat the closure constructions required here in a common framework, we introduce \emph{$\Sigma$-(co)tractable} monoidal structures. Indexed cotractability, together with suitable $\Pi$-types and base-category hypotheses, yields a Dialectica-inspired construction of internal homs in the Grothendieck construction.  
This provides a uniform route to total categories supporting function spaces, built from indexed families.

The resulting framework unifies many constructions of monoidal closed and cartesian closed total categories from indexed categories.
As in other parts of the thesis, concrete needs from AD and CHAD prompted the questions, and the answers feed back as independent contributions to fibred category theory.  
Practical concerns about reverse-mode differentiation thus led to a systematic account of when and how structure lifts through Grothendieck constructions, with value beyond the original application.

\section{Extensivity and related contributions}

A recurring theme in this work is \emph{extensivity}.  
For many programming languages considered modulo $\beta\eta$, the associated free categorical structures are extensive, a fact that repeatedly guided our semantic designs.

To capture this behaviour in a fibred setting and to handle coproducts, we introduced \emph{extensive indexed categories} in \cite{nunes_vakar_2021_chad} (Chapter~3).  
This notion ensures that coproducts lift through Grothendieck constructions, which in turn allows variant types to be modelled cleanly in dependently typed semantics.  
The idea is developed further in \cite{nunes_iterative_chad_2025} (Chapter~4), where we formulate an analogous principle for iteration.  
There we reconcile general while-loops with dependent types by requiring that substitution along an iterated computation be isomorphic to the parameterised initial-algebra carrier functor induced by substitution along its body.

In parallel, the notion of \emph{$\Sigma$-tractable} coproduct structures from Chapter~7 can be read as a generalisation of the decomposition of morphisms supplied by extensivity.  
It controls the behaviour of morphisms with coproduct codomains and clarifies the interaction between coproducts and exponentials inside Grothendieck constructions.

Taken together, these results present extensivity as a unifying principle that underlies coproducts, iteration, lifting, and monoidal closure in our semantics.  
The trajectory is characteristic of the thesis as a whole: requirements that arise from CHAD, such as modelling sums and loops in expressive typed settings, lead to general theorems about extensive and fibred categories.  
Applied questions in differentiable programming thus stimulate advances in categorical foundations, and those advances in turn strengthen the semantic tools available for principled program transformations.

\section{Contributions and Thesis Outline}

The development of \emph{structure-preserving differentiation} yields new results in categorical semantics that stand on their own. The dissertation interleaves practical transformations with these foundational advances.

\medskip
\noindent\textbf{The full thesis is structured as follows:}
\begin{itemize}
	\item \textbf{Chapter 2.} Dual-Numbers AD for ML-family Languages. This chapter corresponds to \cite{nunes_vakar_2024_ml}. We study forward- and reverse-mode dual-numbers AD for an ML-family language (simply typed $\lambda$-calculus with recursive types).  
	It gives an account of dual-numbers AD in both modes for a language with general recursive types, together with a unified semantic correctness proof.

	We present the dual-numbers transformation as a structure-preserving morphism between the freely generated categorical structures associated with the source and target languages.  
	A streamlined logical-relations argument establishes correctness: differentiating syntax yields the same derivative as the denotational semantics of the program.  
	This provides a template for constructing and verifying program transformations in languages with recursive types.  
	Since Pitts’ logical-relations framework for recursive types~\cite{DBLP:journals/iandc/Pitts96,deamorim2022pittsrelationalpropertiesdomains} remains the classical reference in this area, a detailed comparison between our approach and his is left to future work.  
	
	The treatment includes iteration, term recursion, and recursive types. The transformation is specified from the syntactic structure, while the concrete semantics interprets recursion by least fixed points.  
	
	Finally, we derive correct reverse-mode dual-numbers AD from the same transformation using a suitable vector-space interpretation and input wrappers, again for languages with recursive types.  
	The companion implementation work of Smeding and V\'ak\'ar studies efficient dual-numbers reverse AD and the preservation of parallelism~\cite{DBLP:conf/popl/SmedingV23,smeding2022parallel}. These developments complement the semantic correctness results presented here.

	\item \textbf{Chapter 3.} CHAD for expressive total languages. 
	This chapter corresponds to \cite{nunes_vakar_2021_chad}.  
	
	It presents the first categorical account of reverse-mode AD for languages that include not only products and higher-order functions but also coproducts, inductive types, and coinductive types. The same framework also covers forward-mode AD.  
	The framework is fully compositional and avoids both explicit mutation and graph tracing.  
	Its central theoretical contribution is the introduction of \emph{extensive indexed categories}, which capture disjoint sums in a fibred setting. We also introduce related notions that ensure initial algebras and terminal coalgebras lift through the fibration, as well as results on the creation of terminal coalgebras by monadic functors and initial algebras by comonadic functors, for compatible endofunctors.
	
	Within this setting, forward- and reverse-mode AD are formulated as homomorphic transformations, with correctness established via a refined categorical logical-relations proof that exploits monadicity and comonadicity techniques in analysing lifted initial algebras and terminal coalgebras.  
	
	The efficient implementation of CHAD developed by Smeding and V\'ak\'ar~\cite{DBLP:conf/popl/SmedingV24} complements this semantic account. The relationship between the dependent presentation and a simply typed target is examined further in the recent work discussed in Chapter~8.

	\item \textbf{Chapter 4.} Unraveling the iterative CHAD. 
	This chapter corresponds to \cite{nunes_iterative_chad_2025}.  
	
	Iterative CHAD extends CHAD to cover first-order programs with general iteration (while-loops), non-termination, and partial (potentially undefined) operations, while preserving its structure-preserving semantics.  
	
	We introduce \emph{iteration-extensive indexed categories} to model these features in a dependently typed setting.  
	We relate substitution along a loop to parameterised initial algebras of substitution along its body. The resulting folds induce fibred iteration on the op-Grothendieck construction. Iterative folders retain the operations needed by the target syntax, while the enriched models identify the induced iteration with least-fixed-point iteration.  
	Through these categorical constructions and a syntactic categorical model with a universal property, we define a unique homomorphic extension of CHAD in the presence of iteration and non-termination.  
	We then prove a correctness theorem: the differentiated program has the same domain of definition as the original program and computes its reverse-mode derivative wherever it is defined.  
	
	Beyond its theoretical contributions, this work advances the general study of dependently typed languages with iteration.  
	It shows new techniques for reconciling iteration and dependent types, with further applications expected in program logics.  
	
	\item \textbf{Chapter 5.} Free Doubly-Infinitary Distributive Categories.  
	This chapter corresponds to \cite{nunes_vakar_2024_doubly_inf_distrib}.  
	
	We construct the pseudomonad \textsf{Dist}, arising from the canonical pseudo-distributive law between the free completion pseudomonads under small products and small coproducts.  
	Its pseudoalgebras are the \emph{doubly-infinitary distributive categories}, a class of categories in which the canonical functor $\Sigma \colon \Fam{\catC} \to \catC$ preserves small products.  
	This property is strictly stronger than ordinary infinitary distributivity and had not been systematically studied in the literature.  
	
	Our main result is that the free doubly-infinitary distributive completion of any category $C$ is cartesian closed, and therefore supports function types.  
	This is surprising: while free (infinitary) distributive categories generally fail to be cartesian closed, the stronger distributivity of the doubly-infinitary setting makes cartesian closure an \emph{admissible} property of the free construction.  
	We also give explicit formulas for exponentials in $\mathrm{Dist}(C)$, whose combinatorial flavour resonates with Dialectica-style constructions.  
	
	The work originated from attempts to understand programming languages up to $\beta\eta$-equivalence and led to unexpected foundational insights.  
	It reveals new connections between distributivity, freeness, and the categorical representation of higher-order functions, and points toward a categorical formulation of defunctionalisation~\cite{reynolds1972definitional}. 
	Although motivated by differentiable programming, the result stands independently as a contribution to two-dimensional monad theory and to the study of free categorical structures.

	\item \textbf{Chapter 6.} Free extensivity via distributivity.  
	This chapter corresponds to \cite{nunes_extensivity_2025}, and builds directly on \cite{nunes_vakar_2024_doubly_inf_distrib}.  
	We show that a canonical pseudodistributive law between the free completion under coproducts and under suitable limits gives rise to a pseudomonad whose pseudoalgebras correspond to natural classes of extensive categories.  
	
	This result deepens the relationship between distributivity and extensivity by identifying extensivity as a property that emerges from universal constructions.  
	In particular, we prove that free doubly-infinitary lextensive categories are cartesian closed, extending the perspective of Chapter~5 and revealing new structural compatibilities between sum types and function types.  
	
	As before, the motivation stems from programming-language semantics: understanding languages modulo $\beta\eta$ and clarifying the role of extensivity as an admissible property led directly to new foundational results in category theory.

\item \textbf{Chapter 7.} Grothendieck constructions. This chapter corresponds to \cite{nunes_monoidal_grothendieck_2024}.  
The guiding question is: \emph{how much structure of a category can be lifted through a fibration?}  
In particular, we study the preservation of exponentials, both in cartesian and in monoidal contexts.  

This investigation originated from our work on CHAD, which required a systematic understanding of how properties such as closure and extensivity are transported through Grothendieck constructions. Building on the classical literature on fibrations, we develop conditions that apply uniformly to the constructions needed here.  

We begin by analysing fibred limits, colimits, and monoidal (closed) structures of Grothendieck constructions. In doing so, we extend the notion of \emph{extensive indexed category} introduced in Chapter~3, giving rise to the concept of \emph{left Kan extensivity}. This new notion provides a uniform mechanism for computing colimits in Grothendieck constructions and refines the understanding of how coproducts behave under fibred colimit formation.  

Building on this, we introduce \emph{$\Sigma$-(co)tractable} monoidal structures. Indexed cotractability, together with suitable $\Pi$-types and base-category hypotheses, gives sufficient conditions for monoidal closure through a Dialectica-inspired formula. Under this framework, $\Sigma$-tractable coproducts unify and extend cocartesian coclosed structures, biproducts, and extensive coproducts.  

Once again, concrete challenges from AD and CHAD led to advances of independent value in pure category theory, expanding the theory of extensive indexed categories through the introduction of left Kan extensivity and a clearer understanding of exponentials in Grothendieck constructions.

	\item \textbf{Chapter 8.} Conclusion and future work. We reflect on how the dialogue between automatic differentiation and categorical semantics led to the results above, summarise contributions and limitations, and outline directions for program transformations beyond differentiation.
\end{itemize}

\noindent
Universal properties provide a common framework for the transformations studied in this thesis, while categorical logical relations and semantic comparisons establish their correctness. The resulting development contributes both to automatic differentiation for expressive languages and to the study of free and fibred categorical structures. It also suggests a broader programme of structure-preserving program transformations, with AD as its central case study.

\setcounter{chapter}{7}
\chapter{Conclusion and Future Work}

This thesis has examined the interaction between categorical semantics and program transformation through automatic differentiation (AD). The practical requirements of AD led us to account for linear transformations, higher-order functions, recursive types, sums, iteration, and partiality within a compositional framework. Treating programming languages modulo their equations as \emph{freely generated categorical structures} allowed us to define AD by its action on primitives and to organise correctness proofs through universal properties and categorical logical relations. The same investigation raised questions about free completions and Grothendieck constructions, leading to results of independent categorical interest.

\section*{From Foundations to Homomorphic AD}

On the programming-language side, the thesis developed a framework for \emph{homomorphic} (\textit{i.e.} structure-preserving) program transformations, with AD as its main application. Three case studies illustrate the methodology:

\begin{itemize}[itemsep=4pt,parsep=0pt,topsep=6pt]
	\item \textbf{Dual-numbers AD (forward and reverse, Ch.~2).}  
	Dual-numbers AD was established as a structure-preserving morphism between the syntactic models of an ML-family source language and an extended target language, both with recursive types. We established correctness by a reusable categorical logical-relations argument accommodating recursive types, partial primitives and differentiability/local properties~\cite{nunes_vakar_2024_ml}.

	\item \textbf{Forward- and reverse-mode AD: CHAD (Ch.~3).}  
	We extended \emph{Combinatory Homomorphic AD} (CHAD) to expressive total languages with tuples, variant types, inductive and coinductive types, and higher-order functions. The dependent target semantics led us to introduce \emph{extensive indexed categories} to model sums in the Grothendieck construction. Correctness followed from universal properties and categorical logical-relations proofs encompassing initial algebras and terminal coalgebras, using the monadicity and comonadicity results developed in the chapter~\cite{nunes_vakar_2021_chad}.
	
	\item \textbf{Iteration and partiality: Iterative CHAD (Ch.~4).}  
	Extending the first-order fragment of CHAD to general iteration and non-termination led to \emph{iteration-extensive indexed categories}. These relate substitution along a loop to parameterised initial algebras of substitution along its body. Their folds yield fibred iteration on the op-Grothendieck construction, and iterative folders express the corresponding operations in the target language. This distinction between semantic universal properties and syntactic operations supports a homomorphic extension of CHAD with a proof of domain preservation and derivative correctness~\cite{nunes_iterative_chad_2025}.
\end{itemize}

\section*{From AD to Foundations}

Each extension of AD demanded new categorical machinery, leading to stand-alone foundational results:
\begin{itemize}[itemsep=4pt,parsep=0pt,topsep=6pt]
	\item \textbf{Extensivity and distributivity (Ch.~5 and Ch.~6).}  
	By studying a pseudodistributive law between the free completions under products and coproducts, we arrived at the notion of a \emph{doubly-infinitary distributive category}. We proved that the free such categories are cartesian closed.  
	
	Building on this, by analysing pseudodistributive laws between the free completions under limits and under coproducts, we obtained notions of \emph{free lextensive completions}. The free doubly-infinitary lextensive completions are cartesian closed; in the free infinitary lextensive case, objects with finitely many connected components are exponentiable.  
	
	\item \textbf{Grothendieck constructions (Ch.~3, Ch.~4, and Ch.~7).}  
	In Chapters~3 and~4, we introduced several notions of extensive indexed categories, providing a clear understanding of coproducts, iteration, and even parameterised initial algebras (and terminal coalgebras) within their respective Grothendieck constructions.  
	
	In Chapter~7, we studied limits, colimits, and monoidal closure. Indexed \emph{$\Sigma$-cotractability}, together with suitable $\Pi$-types and base-category hypotheses, gives sufficient conditions for monoidal closure of the total category. We also developed \textit{left Kan extensivity} to describe the lifting of colimits to Grothendieck constructions.  
\end{itemize}

These results were inspired by the challenges encountered while devising principled AD macros. They also contribute independently to the understanding of extensive categories, fibred category theory, and two-dimensional universal algebra.

\section*{Limitations}

Alongside providing a solution to the problem of reverse-mode AD for the expressive programming languages studied here, this work has demonstrated that categorical semantics can be employed to design practically relevant program transformations. Our emphasis on correctness and clear semantics was not merely a matter of theoretical elegance: it showed that abstract categorical reasoning can yield concrete AD macros whose derivative correctness is established by semantic proofs. At the same time, this foundation lays the groundwork for systematic reasoning about implementation concerns, such as optimisation and efficiency, by providing a clean and principled semantic framework.

Building on existing work on efficient implementations, the natural next frontiers lie in addressing concrete implementation issues. These include remaining efficiency challenges, the role of defunctionalisation, and the question of how CHAD might further exploit parallelism. While such matters belong partly to the practical engineering of AD systems, categorical structures can provide a sharper and more principled perspective. In particular, there are opportunities to study operational aspects (e.g. rewriting systems), to analyse defunctionalisation through its categorical semantics, and to explore structured approaches to parallelism in CHAD.

Ultimately, the open question is how far this methodology can be taken. This thesis has shown that categorical semantics, when held accountable to practice, is not only explanatory but generative: it can deliver transformations that are provably correct and practically relevant. The challenge now is to push this dialogue further: to see whether the same categorical reasoning that has secured correctness can also chart a systematic roadmap to efficiency, optimisation, and implementation. In this way, theory and practice may continue to advance together, each strengthening and extending the other.

\section*{Recent Developments}

Work completed since the original submission has already advanced two directions suggested by this thesis: extending CHAD to more expressive languages, and understanding more precisely the relationship between its semantic account and its implementation. The two papers discussed below are subsequent work and are not included among the contributions of this thesis.

In the first direction, our joint work with Diogo Simm and Matthijs V\'ak\'ar, \emph{Backpropagation for Effectful Languages I: Finite Probability and Discrete Output Algebraic Effects}~\cite{simm_nunes_vakar_2026_effectful}, extends the CHAD approach to finite-support probabilistic computations with differentiable weights. Reverse propagation must account for the weights as well as the deterministic computations. The construction treats monadic operations and handlers compositionally and proves correctness for handled programs with real-valued outputs. This provides a concrete step towards richer probabilistic languages and other algebraic effects; continuous probability and more general combinations of effects still require further work.

In the second direction, joint work with Diogo Simm and Matthijs V\'ak\'ar~\cite{nunes_simm_vakar_2026_variants} gives a simply typed presentation of dependent CHAD for products, variants and higher-order functions. For variants, primal-indexed idempotents represent the dependent cotangent fibres inside ordinary types. Hoofman and Moerdijk's account of semifunctors and Cauchy completion~\cite{hoofmanMoerdijk1995semifunctors} explains why the ambient transformation preserves composition but sends identities to idempotents, while its interpretation in the completion is an ordinary functor. This relates implementable program representations more directly to the semantic principles developed in this thesis, reinforcing the prospect of advancing theory and practice together.

\section*{Future Work}

The contributions of this thesis point naturally towards two complementary strands of further research:  
(A) advancing program transformations and differentiable programming, and  
(B) developing a categorical toolkit for programming language semantics and program transformations more broadly.  

\Needspace{6\baselineskip}
\paragraph{(A) Program transformations and differentiable programming.}

\begin{itemize}[itemsep=4pt,parsep=0pt,topsep=6pt]
	\item \textbf{Extending CHAD.} Extend the CHAD account to ML-family languages, developing the treatment of recursion and recursive types in an indexed target or an appropriate simply typed presentation.  
	
	\item \textbf{Higher-order differentiation.} Extend the homomorphic approach to higher-order reverse-mode differentiation, moving beyond first derivatives to capture higher-order derivatives in a principled way.  
	
	\item \textbf{Simply typed CHAD.} Extend the idempotent-completion account of variants discussed above to richer language fragments, and study how its semantic comparison supports optimisations such as defunctionalisation, closure conversion, and sharing.  
	
	\item \textbf{AD with effects.} Build on the finite-probability and discrete-output account described above, extending CHAD to richer probabilistic languages and combinations of algebraic effects and handlers while retaining compositional semantics and correctness proofs.  
\end{itemize}

\Needspace{6\baselineskip}
\paragraph{(B) A categorical toolkit for PL semantics and program transformations.}

\begin{itemize}[itemsep=4pt,parsep=0pt,topsep=6pt]
	\item \textbf{Presentations via 2-dimensional universal algebra.} Systematise the theory of presentations of categorical structures for programming-language semantics, with particular attention to coherence and to comparisons between alternative categorical models of the same constructs.  
	
	\item \textbf{Beyond AD.} The thesis suggests an underlying principle for transformations whose auxiliary information depends on the program value. Ongoing work investigates how such value-indexed semantic data can give rise to uniform, structure-preserving transformations beyond differentiation. Their further applications and implementation remain a direction for future research.  
\end{itemize}

\section*{Closing}

This thesis has shown how categorical semantics can yield practical methods for constructing correct program transformations. We derived compositional AD transformations for expressive languages, and the CHAD account developed here formed part of the foundation for Smeding and V\'ak\'ar's efficient implementation~\cite{DBLP:conf/popl/SmedingV24}. Conversely, the requirements of AD led to independent mathematical results on free completions, extensivity and Grothendieck constructions.

Subsequent work on simply typed CHAD further clarifies the relation between these semantic foundations and their implementation. These developments support a promising programme of research in which categorical models, correctness proofs and efficient implementations are developed together. The thesis thus substantiates the principle with which it began: sound semantic foundations can guide the principled construction of reliable and practically useful program transformations, with explicit correctness guarantees, while the demands of computation can lead to new mathematics.

\backmatter
\cleardoublepage
\phantomsection

\chapter{Samenvatting}
De categorietheorie biedt een krachtig fundament voor de semantiek van programmeertalen en maakt het mogelijk om typesystemen, programma’s en programequivalenties wiskundig nauwkeurig te modelleren. 
Klassieke werken zoals J.~Lambek en P.~Scott’s \textit{Introduction to Higher-Order Categorical Logic}, R.~Crole’s \textit{Categories for Types}, en B.~Jacobs’ \textit{Categorical Logic and Type Theory} tonen aan hoe categorische structuren een breed scala aan type-theorieën en programmeertalen kunnen modelleren. 
Voortbouwend op deze fundamenten introduceerde E.~Moggi de monaden, waarmee hij een verenigend categorisch kader bood voor het representeren van computationele effecten, zoals toestand, uitzonderingen en niet-determinisme. 
Sindsdien is de invloed van de categorietheorie alleen maar toegenomen, zowel bij het analyseren van bestaande taalfaciliteiten als bij het inspireren van nieuwe. 
Zo identificeerden M.~Fiore en G.~Plotkin in hun werk over axiomatische domeintheorie categorische voorwaarden, zoals algebraïsche compactheid, die voldoende zijn om recursieve domeinvergelijkingen op te lossen en zo een categorische generalisatie van domeinsemantiek te verkrijgen. 
Evenzo introduceerden G.~Plotkin en J.~Power met hun theorie van algebraïsche effecten een modulaire benadering om effecten te behandelen via algebraïsche theorieën, waarmee zij Moggi’s monadische perspectief verfijnden en uitbreidden. 
Meer recent verenigde P.~B.~Levy’s \textit{Call-by-Push-Value} de evaluatiestrategieën \textit{call-by-name} en \textit{call-by-value} binnen één calculus, ondersteund door een elegante categorische semantiek gebaseerd op een adjunctiedecompositie van Moggi’s kader. 
Gezamenlijk tonen deze ontwikkelingen aan dat de categorietheorie een robuust en verenigend fundament blijft bieden voor de semantiek van programmeertalen.\par\medskip
\textbf{Dit proefschrift: } Voortbouwend op deze traditie ontwikkelt dit proefschrift de semantiek van programmeertalen verder door middel van categorische structuren, telkens geleid door een concreet en praktisch probleem: \emph{structuurbehoudende programmotransformaties voor het berekenen van afgeleiden, een techniek die algemeen bekendstaat als automatische differentiatie (AD)}. 
Onze theorie wordt ontwikkeld in nauwe samenhang met implementatievraagstukken, 
zodat onze categorische constructies realistische programmotransformaties opleveren. 
Deze transformaties zijn compositioneel; hun correctheid wordt bewezen met categorische logische relaties. Zij bieden tevens een basis voor verdere analyse van hun complexiteit.\par\medskip
\noindent
Onze centrale benadering beschouwt de syntaxis van een programmeertaal, modulo haar standaard $\beta\eta$-equationele theorie, als een geschikte \emph{vrij gegenereerde} categorische structuur. 
Programmotransformaties die de structuur van programma’s behouden, corresponderen dan met \emph{structuurbehoudende (of homomorfe) functoren} die voortkomen uit de universele eigenschappen van deze vrije constructies. 
Volgens dit principe definiëren we \emph{voorwaartse automatische differentiatie} in de stijl van duale getallen als een structuurbehoudende functor van de syntactische categorie van een recursieve hogere-ordebrontaal naar die van een uitgebreide doeltaal, waarbij elke primitieve operatie wordt afgebeeld op code voor haar waarde en de werking van haar afgeleide. 
We bewijzen de correctheid met behulp van een nieuwe techniek van \emph{logische relaties} voor talen met recursieve types en partiële primitieve operaties.\par\medskip
\noindent
Vervolgens breiden we dit homomorfe perspectief op AD uit naar de meer uitdagende en veelgebruikte \emph{achterwaartse modus} van AD, door middel van \textbf{CHAD} (\textit{Combinatory Homomorphic AD}), die \emph{totale} hogere-ordeprogrammeertalen behandelt met expressieve constructies: producten, sommen, inductieve en coïnductieve types, en functietypes. 
De semantiek van CHAD wordt ontwikkeld met behulp van de totale categorieën (Grothendieck-constructies) die horen bij \emph{geïndexeerde categorieën}, en introduceert het begrip \emph{extensieve geïndexeerde categorieën} om disjuncte sommen te behandelen. 
De correctheid volgt uit de universele eigenschap en een geschikte categorische techniek van logische relaties.\par\medskip
\noindent
We breiden CHAD verder uit naar \emph{partiële} eerste-ordetalen met algemene iteratie. 
De \emph{Iteratieve CHAD} integreert iteratie via \emph{iteratie-extensieve geïndexeerde categorieën}, waardoor we onbegrensde lussen kunnen interpreteren en aantonen dat de transformatie opnieuw uniek structuurbehoudend en correct is.\par\medskip
\noindent
In de laatste hoofdstukken tonen we aan dat veel van het wiskundige instrumentarium dat ontwikkeld werd om CHAD te begrijpen, zich op natuurlijke wijze uitbreidt tot fundamentele resultaten in de vezelcategorietheorie, waarbij de structuur van Grothendieck-constructies van geïndexeerde categorieën wordt verduidelijkt, evenals in de theorie van distributieve en extensieve categorieën.\par\medskip

Het volledige proefschrift is als volgt opgebouwd:
\begin{enumerate}[leftmargin=2em,itemsep=6pt,parsep=0pt]
	\Needspace{5\baselineskip}
	\item \textbf{Inleiding}
	\Needspace{5\baselineskip}
	\item \textbf{Automatic Differentiation for ML-family Languages}\\
	(\href{https://arxiv.org/pdf/2210.07724}{arXiv:2210.07724})\\
	Dit hoofdstuk komt overeen met \textit{F.~Lucatelli Nunes} en \textit{M.~Vákár}, 
	``Automatic Differentiation for ML-family languages: correctness via logical relations,'' 
	\textit{Mathematical Structures in Computer Science} \textbf{34}(8), 747--806 (2024), 
	ook beschikbaar via \href{https://arxiv.org/abs/2210.07724}{arXiv:2210.07724}. 
	We ontwikkelen voorwaartse en achterwaartse automatische differentiatie (AD) via duale getallen voor ML-familietalen met recursieve types en bewijzen de correctheid met behulp van logische relaties. 
	De AD-macro wordt geformuleerd als een structuurbehoudend morfisme tussen de syntactische modellen van de bron- en doeltaal, beide met recursieve types. 
	Voor het correctheidsbewijs introduceren we een gestroomlijnd en herbruikbaar raamwerk van logische relaties voor redeneren over recursieve types en talen met effecten.
	
	\Needspace{5\baselineskip}
	\item \textbf{CHAD for Expressive Total Languages}\\
	(\href{https://arxiv.org/pdf/2110.00446}{arXiv:2110.00446})\\
	Dit hoofdstuk komt overeen met \textit{F.~Lucatelli Nunes} en \textit{M.~Vákár}, 
	``CHAD for expressive total languages,'' 
	\textit{Mathematical Structures in Computer Science} \textbf{33}(4--5), 311--426 (2023), 
	\href{https://doi.org/10.1017/S096012952300018X}{doi:10.1017/S096012952300018X}, 
	ook beschikbaar via \href{https://arxiv.org/abs/2110.00446}{arXiv:2110.00446}. 
	We ontwikkelen CHAD voor totale talen met sommen, producten, inductieve, coïnductieve en functietypes. 
	Om dit te bereiken integreren we inductieve en coïnductieve types in afhankelijke typesemantiek en verduidelijken hun relatie via het nieuwe begrip \emph{extensieve geïndexeerde categorieën}. 
	Daarnaast introduceren we krachtige categorische technieken van logische relaties voor redeneren over programmeertalen, gebaseerd op resultaten over monadiciteit en comonadiciteit.
	
	\Needspace{5\baselineskip}
	\item \textbf{Unraveling the Iterative CHAD}\\
	(\href{https://arxiv.org/pdf/2505.15002}{arXiv:2505.15002})\\
	Dit hoofdstuk komt overeen met \textit{F.~Lucatelli Nunes}, \textit{G.~Plotkin} en \textit{M.~Vákár}, 
	``Unraveling the Iterative CHAD,'' 
	\textit{Preprint}, \href{https://arxiv.org/abs/2505.15002v3}{arXiv:2505.15002v3} (2025), herzien op 13 september 2026. 
	We breiden CHAD uit naar eerste-ordetalen met iteratie en niet-terminatie, waarbij we fixpunten integreren via het begrip \emph{iteratie-extensieve geïndexeerde categorieën}, dat hierin wordt geïntroduceerd. 
	Daartoe ontwikkelen we verschillende nieuwe vormen van \emph{extensieve geïndexeerde categorieën} en, belangrijker nog, het concept van \emph{iteratie-extensieve geïndexeerde categorieën}, die de categorische basis bieden voor bepaalde colimieten en iteratie in de overeenkomstige Grothendieck-constructies. 
	De bijbehorende \emph{iteratieve folders} beschrijven de operaties die nodig zijn om de transformatie in de doeltaal uit te drukken.
	
	\Needspace{5\baselineskip}
	\item \textbf{Free Doubly-Infinitary Distributive Categories}\\
	(\href{https://arxiv.org/pdf/2403.10447}{arXiv:2403.10447})\\
	Dit hoofdstuk komt overeen met \textit{F.~Lucatelli Nunes} en \textit{M.~Vákár}, 
	``Free Doubly-Infinitary Distributive Categories are Cartesian Closed,'' 
	\textit{Applied Categorical Structures} \textbf{34}, artikel~53 (2026)~\cite{nunes_vakar_2024_doubly_inf_distrib}, ook beschikbaar via \href{https://arxiv.org/abs/2403.10447}{arXiv:2403.10447}. 
	Onder andere introduceren we het concept van \emph{dubbel-oneindig distributieve categorieën}, dat wil zeggen categorieën waarin willekeurige kleine producten distribueren over willekeurige kleine coproducten, in de precieze zin die in Hoofdstuk~5 wordt ontwikkeld. 
	We bewijzen verder dat vrij dubbel-oneindig distributieve categorieën cartesiaans gesloten zijn.
	
	\Needspace{5\baselineskip}
	\item \textbf{Free Extensivity via Distributivity}\\
	(\href{https://arxiv.org/pdf/2405.02185}{arXiv:2405.02185})\\
	Dit hoofdstuk komt overeen met \textit{F.~Lucatelli Nunes}, \textit{R.~Prezado}, en \textit{M.~Vákár}, 
	``Free Extensivity via Distributivity,'' 
	\textit{Portugaliae Mathematica} \textbf{82}(1--2), 177--204 (2025), 
	ook beschikbaar via \href{https://arxiv.org/abs/2405.02185}{arXiv:2405.02185}. 
	We bestuderen de canonieke pseudodistributieve wet tussen de vrije coproduct-voltooiingspseudomonade en verschillende vrije limiet-voltooiingspseudomonaden. 
	Wanneer de klasse van limieten pullbacks omvat, leidt deze interactie tot natuurlijke begrippen van \emph{extensieve categorieën}. 
	We tonen aan dat extensieve categorieën met pullbacks en oneindig lex-extensieve categorieën verschijnen als pseudoalgebra’s voor de pseudomonaden die door deze pseudodistributieve wetten worden geïnduceerd. 
	We introduceren het begrip \emph{dubbel-oneindig lex-extensieve categorieën} en bewijzen dat de vrij gegenereerde zulke categorieën cartesiaans gesloten zijn. 
	Hieruit leiden we verder af dat in vrij gegenereerde oneindig lex-extensieve categorieën de objecten met een eindig aantal verbonden componenten exponentieerbaar zijn.
	
	\Needspace{20\baselineskip}
	\item \textbf{Grothendieck constructions}\\
	(\href{https://arxiv.org/pdf/2405.07724}{arXiv:2405.07724})\\
	Dit hoofdstuk komt overeen met \textit{F.~Lucatelli Nunes} en \textit{M.~Vákár}, 
	``Monoidal Closure of Grothendieck Constructions via $\Sigma$-tractable Monoidal Structures and Dialectica Formulas,'' 
	\textit{Theory and Applications of Categories} \textbf{44}(35), 1153--1217 (2025); zie de bibliografie~\cite{nunes_monoidal_grothendieck_2024} voor de publicatiegegevens. 
	We bestuderen Grothendieck-constructies van geïndexeerde categorieën en introduceren het begrip \emph{left-Kan-extensiviteit}, dat de bestaande kennis over extensieve geïndexeerde categorieën verfijnt en uitbreidt en zo het bredere landschap van extensiviteitsbegrippen verrijkt. 
	Dit begrip biedt een systematisch kader voor het analyseren en berekenen van colimieten binnen Grothendieck-constructies. 
	Om monoidale geslotenheid te onderzoeken, introduceren we bovendien het begrip \emph{$\Sigma$-(co)tractabele monoidale structuren} en tonen we aan dat dit kader voldoende voorwaarden oplevert voor het bestaan van gesloten monoidale structuren op Grothendieck-constructies.
\Needspace{5\baselineskip}
	\item \textbf{Conclusie en toekomstig werk}
\end{enumerate}

\cleardoublepage
\chapter{Summary}
Category theory provides a powerful foundation for programming language semantics, enabling rigorous mathematical modeling of type systems, programs, and program equivalences. 
Classic works such as J.~Lambek and P.~Scott’s \textit{Introduction to Higher-Order Categorical Logic}, R.~Crole’s \textit{Categories for Types}, and B.~Jacobs’ \textit{Categorical Logic and Type Theory} demonstrate how categorical structures can model a wide range of type theories and programming languages. 
Building on these foundations, E.~Moggi’s introduction of monads offered a unifying categorical framework for representing computational effects, including state, exceptions, and nondeterminism. 
Since then, the influence of category theory has only deepened, both in analysing existing programming language features and in inspiring new ones. 
For example, M.~Fiore and G.~Plotkin’s work on axiomatic domain theory identified categorical conditions, such as algebraic compactness, sufficient to solve recursive domain equations, thereby providing a categorical generalization of domain semantics. 
Similarly, G.~Plotkin and J.~Power’s theory of algebraic effects introduced a modular approach to handling effects through algebraic theories, refining and extending Moggi’s monadic perspective. 
More recently, P.~B.~Levy’s Call-by-Push-Value unified call-by-name and call-by-value evaluation strategies within a single calculus, supported by an elegant categorical semantics based on an adjunction decomposition of Moggi’s framework. 
Together, these developments demonstrate how category theory continues to provide a robust and unifying foundation for the semantics of programming languages.\par\medskip
\textbf{This thesis: } Building on this tradition, this thesis advances programming language semantics through the lens of categorical structures, guided at every step by a concrete practical problem: \emph{structure-preserving program transformations for computing derivatives, a technique commonly known as Automatic Differentiation (AD)}.
Our theory is developed hand-in-hand with implementation challenges, 
ensuring our categorical constructions yield  realistic program transformations. These transformations are compositional, with correctness established by categorical logical-relations proofs, and provide a basis for further analysis of their complexity.
\par\medskip
\noindent
Our central approach is to view the syntax of a programming language, modulo its standard $\beta\eta$-equational theory, as a suitable \emph{freely generated} categorical structure. Program transformations that preserve program structure then correspond to \emph{structure-preserving (or homomorphic) functors} induced by the universal properties of these free constructions. Using this principle, we define \emph{forward-mode AD}, in the style of dual numbers, as a structure-preserving functor from the syntactic category of a recursive higher-order source language to that of an extended target language, sending each primitive operation to code for its value and derivative action. We prove correctness using a novel \emph{logical-relations technique} for languages featuring recursive types and partial primitives.\par\medskip
\noindent
We then extend this homomorphic perspective on AD to the more challenging and popular \emph{reverse-mode AD} method, using \textbf{CHAD} (Combinatory Homomorphic AD), which handles higher-order \emph{total} languages with expressive features: products, sums, inductive and coinductive types, and function types.
CHAD's semantics is developed using the total categories (Grothendieck constructions) associated with \emph{indexed categories}, and it introduces the notion of \emph{extensive indexed categories} to handle disjoint sums. Correctness follows from the universal property and an appropriate categorical logical-relations technique.\par\medskip
\noindent
We further broaden CHAD to \emph{partial} first-order languages with general iteration. Iteration-extensive indexed categories relate loop substitution to parameterised initial algebras; iterative folders supply the operations needed by the target language. This allows us to interpret unbounded loops and prove the correctness of the extended transformation.\par\medskip
\noindent
In the final chapters, we show that much of the mathematical machinery developed to understand CHAD extends naturally to yield foundational results in fibred category theory, clarifying the structure of Grothendieck constructions of indexed categories, as well as in the theory of distributive and extensive categories.\\

The full thesis is structured as follows:
\begin{enumerate}[leftmargin=2em,itemsep=6pt,parsep=0pt]
	\Needspace{5\baselineskip}
	\item \textbf{Introduction}
	\Needspace{5\baselineskip}
	\item \textbf{Automatic Differentiation for ML-family Languages}\\
	(\href{https://arxiv.org/pdf/2210.07724}{arXiv:2210.07724})\\
	This chapter corresponds to \textit{F.~Lucatelli Nunes} and \textit{M.~Vákár}, 
	``Automatic Differentiation for ML-family languages: correctness via logical relations,''
	\textit{Mathematical Structures in Computer Science} \textbf{34}(8), 747--806 (2024), 
	also available as \href{https://arxiv.org/abs/2210.07724}{arXiv:2210.07724}.
	We develop forward- and reverse-mode automatic differentiation (AD) via dual numbers for 
	ML-family languages featuring recursive types and establish its correctness using 
	logical relations. The AD macro is formulated as a structure-preserving morphism between the syntactic models of the source and target languages, both with recursive types. 
	To prove correctness, we introduce a streamlined and reusable logical-relations framework 
	for reasoning about recursive types and languages with effects.
	
	\Needspace{5\baselineskip}
	\item \textbf{CHAD for Expressive Total Languages}\\
	(\href{https://arxiv.org/pdf/2110.00446}{arXiv:2110.00446})\\
	This chapter corresponds to \textit{F.~Lucatelli Nunes} and \textit{M.~Vákár}, 
	``CHAD for expressive total languages,'' 
	\textit{Mathematical Structures in Computer Science} \textbf{33}(4--5), 311--426 (2023), 
	\href{https://doi.org/10.1017/S096012952300018X}{doi:10.1017/S096012952300018X}, 
	also available as \href{https://arxiv.org/abs/2110.00446}{arXiv:2110.00446}.
	We develop CHAD for total languages with sums, products, inductive, coinductive, and function types. 
	To achieve this, we integrate inductive and coinductive types into dependent type semantics and clarify their intricate relationship through the novel notion of \emph{extensive indexed categories}. 
	We further introduce powerful categorical logical-relations techniques for reasoning about programming languages, drawing on results concerning monadicity and comonadicity.
	
	\Needspace{5\baselineskip}
	\item \textbf{Unraveling the Iterative CHAD}\\
	(\href{https://arxiv.org/pdf/2505.15002}{arXiv:2505.15002})\\
	This chapter corresponds to \textit{F.~Lucatelli Nunes}, \textit{G.~Plotkin} and \textit{M.~Vákár}, 
	``Unraveling the Iterative CHAD,'' 
	\textit{Preprint}, \href{https://arxiv.org/abs/2505.15002v3}{arXiv:2505.15002v3} (2025), revised 13 September 2026.
	We extend CHAD to first-order languages featuring iteration and non-termination, integrating fixpoints through the notion of \emph{iteration-extensive indexed categories}, introduced therein. 
	To this end, we develop several novel notions of \emph{extensive indexed categories} and, more significantly, the concept of \emph{iteration-extensive indexed categories}, which provide the categorical foundations for supporting certain colimits and iteration in the corresponding Grothendieck constructions. 
	The corresponding \emph{iterative folders} specify the operations needed to express the transformation in the target language.

	\Needspace{5\baselineskip}
	\item \textbf{Free Doubly-Infinitary Distributive Categories}\\ (\href{https://arxiv.org/pdf/2403.10447}{arXiv:2403.10447})\\
	This chapter corresponds to \textit{F.~Lucatelli Nunes} and \textit{M.~Vákár}, 
	``Free Doubly-Infinitary Distributive Categories are Cartesian Closed,'' 
	\textit{Applied Categorical Structures} \textbf{34}, article~53 (2026)~\cite{nunes_vakar_2024_doubly_inf_distrib}, also available as \href{https://arxiv.org/abs/2403.10447}{arXiv:2403.10447}.
	Among other contributions, we establish the concept of \emph{doubly-infinitary distributive categories}, that is, categories in which arbitrary small products distribute over arbitrary small coproducts in the precise sense developed in Chapter~5. 
	We further prove that free doubly-infinitary distributive categories are cartesian closed.

	\Needspace{5\baselineskip}
	\item \textbf{Free Extensivity via Distributivity}\\
	(\href{https://arxiv.org/pdf/2405.02185}{arXiv:2405.02185})\\
	This chapter corresponds to \textit{F.~Lucatelli Nunes}, \textit{R.~Prezado}, and \textit{M.~Vákár}, 
	``Free Extensivity via Distributivity,'' 
	\textit{Portugaliae Mathematica} \textbf{82}(1--2), 177--204 (2025), 
	also available as \href{https://arxiv.org/abs/2405.02185}{arXiv:2405.02185}.
	We study the canonical pseudodistributive law between the free coproduct completion pseudomonad and various free limit completion pseudomonads. 
	When the class of limits includes pullbacks, this interaction gives rise to natural notions of \emph{extensive categories}. 
	In particular, we show that extensive categories with pullbacks and infinitary lextensive categories arise as the pseudoalgebras for the pseudomonads induced by these pseudodistributive laws. 
	We introduce the notion of \emph{doubly-infinitary lextensive categories} and prove that the freely generated such categories are cartesian closed. 
	From this, we further deduce that, in freely generated infinitary lextensive categories, the objects with finitely many connected components are exponentiable.
	
	\Needspace{20\baselineskip}
	\item \textbf{Grothendieck constructions}\\ (\href{https://arxiv.org/pdf/2405.07724}{arXiv:2405.07724})\\
	This chapter corresponds to \textit{F.~Lucatelli Nunes} and \textit{M.~Vákár}, 
	``Monoidal Closure of Grothendieck Constructions via $\Sigma$-tractable Monoidal Structures and Dialectica Formulas,'' published in 
	\textit{Theory and Applications of Categories} \textbf{44}(35), 1153--1217 (2025)~\cite{nunes_monoidal_grothendieck_2024}, 
	\href{https://arxiv.org/abs/2405.07724}{arXiv:2405.07724} [math.CT].
	We study Grothendieck constructions of indexed categories, introducing the notion of \emph{left-Kan extensivity}, which refines and extends the existing understanding of extensive indexed categories while enriching the broader landscape of extensivity-like notions. 
	This concept provides a systematic framework for analyzing and computing colimits within Grothendieck constructions. 
	
	Furthermore, to investigate monoidal closedness, we introduce the notion of \emph{$\Sigma$-(co)tractable monoidal structures} and show that this framework yields sufficient conditions for the existence of closed monoidal structures on Grothendieck constructions.

	\Needspace{5\baselineskip}
	\item \textbf{Conclusion and Future Work}
\end{enumerate}


\addcontentsline{toc}{chapter}{Bibliography}
\begin{thebibliography}{40}
\providecommand{\natexlab}[1]{#1}
\providecommand{\url}[1]{\texttt{#1}}
\expandafter\ifx\csname urlstyle\endcsname\relax
  \providecommand{\doi}[1]{doi: #1}\else
  \providecommand{\doi}{doi: \begingroup \urlstyle{rm}\Url}\fi

\bibitem[Abadi and Plotkin(2020)]{PLOTKU}
Mart{\'i}n Abadi and Gordon~D. Plotkin.
\newblock {{A Simple Differentiable Programming Language}}.
\newblock \emph{Proceedings of the ACM on Programming Languages}, 4\penalty0
  (POPL):\penalty0 38:1--38:28, January 2020.
\newblock \doi{10.1145/3371106}.
\newblock URL \url{https://doi.org/10.1145/3371106}.

\bibitem[Abadi et~al.(2015)Abadi, Agarwal, Barham, Brevdo, Chen, Citro,
  Corrado, Davis, Dean, Devin, Ghemawat, Goodfellow, Harp, Irving, Isard, Jia,
  Jozefowicz, Kaiser, Kudlur, Levenberg, Man{\'e}, Monga, Moore, Murray, Olah,
  Schuster, Shlens, Steiner, Sutskever, Talwar, Tucker, Vanhoucke, Vasudevan,
  Vi{\'e}gas, Vinyals, Warden, Wattenberg, Wicke, Yu, and
  Zheng]{tensorflow2015}
Mart{\'i}n Abadi, Ashish Agarwal, Paul Barham, Eugene Brevdo, Zhifeng Chen,
  Craig Citro, Greg~S. Corrado, Andy Davis, Jeffrey Dean, Matthieu Devin,
  Sanjay Ghemawat, Ian Goodfellow, Andrew Harp, Geoffrey Irving, Michael Isard,
  Yangqing Jia, Rafal Jozefowicz, Lukasz Kaiser, Manjunath Kudlur, Josh
  Levenberg, Dandelion Man{\'e}, Rajat Monga, Sherry Moore, Derek Murray, Chris
  Olah, Mike Schuster, Jonathon Shlens, Benoit Steiner, Ilya Sutskever, Kunal
  Talwar, Paul Tucker, Vincent Vanhoucke, Vijay Vasudevan, Fernanda Vi{\'e}gas,
  Oriol Vinyals, Pete Warden, Martin Wattenberg, Martin Wicke, Yuan Yu, and
  Xiaoqiang Zheng.
\newblock {{TensorFlow: Large-Scale Machine Learning on Heterogeneous
  Systems}}, 2015.
\newblock URL \url{https://www.tensorflow.org/}.
\newblock Software.

\bibitem[Abbott et~al.(2003)Abbott, Altenkirch, and
  Ghani]{AbbottAltenkirchGhaniFoSSaCS03}
Michael Abbott, Thorsten Altenkirch, and Neil Ghani.
\newblock {{Categories of Containers}}.
\newblock In \emph{FoSSaCS 2003}, volume 2620 of \emph{Lecture Notes in
  Computer Science}, pages 23--38. Springer, 2003.
\newblock \doi{10.1007/3-540-36576-1_2}.
\newblock URL \url{https://people.cs.nott.ac.uk/psztxa/publ/fossacs03.pdf}.

\bibitem[Ad{\'a}mek and Rosick{\'y}(2020)]{AdamekFreeCoproductCompletion}
Ji{\v r}{\'i} Ad{\'a}mek and Ji{\v r}{\'i} Rosick{\'y}.
\newblock {{How nice are free completions of categories?}}
\newblock \emph{Topology and its Applications}, 273:\penalty0 106972, 2020.
\newblock \doi{10.1016/j.topol.2019.106972}.
\newblock URL
  \url{https://www.sciencedirect.com/science/article/pii/S0166864119303773}.

\bibitem[Ad\'{a}mek et~al.(2006)Ad\'{a}mek, Herrlich, and
  Strecker]{AdamekHerrlichStrecker2006}
Ji\v{r}\'{i} Ad\'{a}mek, Horst Herrlich, and George~E. Strecker.
\newblock \emph{{{Abstract and Concrete Categories: The Joy of Cats}}}.
\newblock Theory and Applications of Categories, 2006.
\newblock URL \url{https://www.tac.mta.ca/tac/reprints/articles/17/tr17.pdf}.
\newblock Reprint, available online.

\bibitem[Baydin et~al.(2018)Baydin, Pearlmutter, Radul, and
  Siskind]{baydin2018automatic}
Atilim~Gunes Baydin, Barak~A. Pearlmutter, Alexey~Andreyevich Radul, and
  Jeffrey~Mark Siskind.
\newblock {{Automatic Differentiation in Machine Learning: A Survey}}.
\newblock \emph{Journal of Machine Learning Research}, 18\penalty0
  (153):\penalty0 1--43, 2018.
\newblock URL \url{http://jmlr.org/papers/v18/17-468.html}.

\bibitem[Bradbury et~al.(2018)Bradbury, Frostig, Hawkins, Johnson, Katariya,
  Leary, Maclaurin, Necula, Paszke, Vander{P}las, Wanderman-{M}ilne, and
  Zhang]{jax2018github}
James Bradbury, Roy Frostig, Peter Hawkins, Matthew~James Johnson, Yash
  Katariya, Chris Leary, Dougal Maclaurin, George Necula, Adam Paszke, Jake
  Vander{P}las, Skye Wanderman-{M}ilne, and Qiao Zhang.
\newblock {{JAX: composable transformations of Python+NumPy programs}}, 2018.
\newblock URL \url{https://github.com/jax-ml/jax}.
\newblock Software.

\bibitem[Crole(1993)]{crole1993categories}
Roy~L. Crole.
\newblock \emph{{{Categories for Types}}}.
\newblock Cambridge Mathematical Textbooks. Cambridge University Press, 1993.
\newblock \doi{10.1017/CBO9781139172707}.
\newblock URL
  \url{http://www.cambridge.org/de/academic/subjects/computer-science/programming-languages-and-applied-logic/categories-types?format=PB}.

\bibitem[de~Amorim(2022)]{deamorim2022pittsrelationalpropertiesdomains}
Arthur~Azevedo de~Amorim.
\newblock {{On Pitts' Relational Properties of Domains}}.
\newblock arXiv:2207.07053, 2022.
\newblock URL \url{https://arxiv.org/abs/2207.07053}.

\bibitem[Elliott(2018)]{elliott2018essence}
Conal Elliott.
\newblock {{The simple essence of automatic differentiation}}.
\newblock \emph{Proc. ACM Program. Lang.}, 2\penalty0 (ICFP):\penalty0
  70:1--70:29, jul 2018.
\newblock \doi{10.1145/3236765}.
\newblock URL \url{https://doi.org/10.1145/3236765}.

\bibitem[Fiore and Plotkin(1997)]{fiore1996axiomatic}
Marcelo~P. Fiore and Gordon~D. Plotkin.
\newblock {{An extension of models of Axiomatic Domain Theory to models of
  Synthetic Domain Theory}}.
\newblock In Dirk van Dalen and Marc Bezem, editors, \emph{Computer Science
  Logic}, volume 1258 of \emph{Lecture Notes in Computer Science}, pages
  129--149, Berlin, Heidelberg, 1997. Springer Berlin Heidelberg.
\newblock \doi{10.1007/3-540-63172-0_36}.

\bibitem[Griewank and Walther(2008)]{griewank2008evaluating}
Andreas Griewank and Andrea Walther.
\newblock \emph{{{Evaluating Derivatives: Principles and Techniques of
  Algorithmic Differentiation}}}.
\newblock Society for Industrial and Applied Mathematics, second edition, 2008.
\newblock \doi{10.1137/1.9780898717761}.
\newblock URL \url{https://epubs.siam.org/doi/abs/10.1137/1.9780898717761}.

\bibitem[Hoofman and Moerdijk(1995)]{hoofmanMoerdijk1995semifunctors}
Raymond Hoofman and Ieke Moerdijk.
\newblock A remark on the theory of semi-functors.
\newblock \emph{Mathematical Structures in Computer Science}, 5\penalty0
  (1):\penalty0 1--8, 1995.
\newblock \doi{10.1017/S096012950000061X}.

\bibitem[Huot et~al.(2020)Huot, Staton, and
  V{\'a}k{\'a}r]{DBLP:conf/fossacs/HuotSV20}
Mathieu Huot, Sam Staton, and Matthijs V{\'a}k{\'a}r.
\newblock {{Correctness of Automatic Differentiation via Diffeologies and
  Categorical Gluing}}.
\newblock In Jean Goubault-Larrecq and Barbara K{\"o}nig, editors,
  \emph{Foundations of Software Science and Computation Structures}, pages
  319--338, Cham, 2020. Springer International Publishing.
\newblock \doi{10.1007/978-3-030-45231-5_17}.

\bibitem[Jacobs(1999)]{jacobs1999categorical}
Bart Jacobs.
\newblock \emph{{{Categorical Logic and Type Theory}}}, volume 141 of
  \emph{Studies in Logic and the Foundations of Mathematics}.
\newblock Elsevier, 1999.

\bibitem[Joss(1976)]{joss1976algorithmisches}
Johann Joss.
\newblock \emph{{{Algorithmisches Differenzieren}}}.
\newblock PhD thesis, ETH Z{\"u}rich, 1976.
\newblock URL
  \url{https://www.research-collection.ethz.ch/handle/20.500.11850/134597}.
\newblock Diss.\ Math.\ ETH Z{\"u}rich, Nr.\ 5757.

\bibitem[Knuth(1991)]{knuth1991theorypractice}
Donald~E. Knuth.
\newblock {{Theory and practice}}.
\newblock \emph{Theoretical Computer Science}, 90:\penalty0 1--15, 1991.
\newblock URL \url{https://arxiv.org/abs/cs/9301114}.

\bibitem[Lambek and Scott(1986)]{lambek1986introduction}
Joachim Lambek and Philip~J. Scott.
\newblock \emph{{{Introduction to Higher Order Categorical Logic}}}, volume~7
  of \emph{Cambridge Studies in Advanced Mathematics}.
\newblock Cambridge University Press, 1986.

\bibitem[Levy(1999)]{levy1999call}
Paul~Blain Levy.
\newblock {{Call-by-Push-Value: A Subsuming Paradigm}}.
\newblock In \emph{Typed Lambda Calculi and Applications}, volume 1581 of
  \emph{Lecture Notes in Computer Science}, pages 228--243. Springer, 1999.
\newblock \doi{10.1007/3-540-48959-2_17}.

\bibitem[Levy(2006)]{levy2006cbpv}
Paul~Blain Levy.
\newblock {{Call-by-push-value: Decomposing call-by-value and call-by-name}}.
\newblock \emph{Higher-Order and Symbolic Computation}, 19\penalty0
  (4):\penalty0 377--414, 2006.
\newblock \doi{10.1007/s10990-006-0480-6}.
\newblock URL \url{https://doi.org/10.1007/s10990-006-0480-6}.

\bibitem[{Lucatelli Nunes} and V{\'a}k{\'a}r(2023)]{nunes_vakar_2021_chad}
F.~{Lucatelli Nunes} and Matthijs V{\'a}k{\'a}r.
\newblock {{CHAD for expressive total languages}}.
\newblock \emph{Mathematical Structures in Computer Science}, 33\penalty0
  (4--5):\penalty0 311--426, 2023.
\newblock \doi{10.1017/S096012952300018X}.

\bibitem[{Lucatelli Nunes} and V{\'{a}}k{\'{a}}r(2023)]{nunes_vakar_mfo_4046}
F.~{Lucatelli Nunes} and Matthijs V{\'{a}}k{\'{a}}r.
\newblock {{Logical Relations for Partial Features and Automatic
  Differentiation Correctness}}.
\newblock Technical Report 2023--09, Mathematisches Forschungsinstitut
  Oberwolfach, 2023.
\newblock URL
  \url{https://publications.mfo.de/bitstream/handle/mfo/4046/OWP-2023-09.pdf}.

\bibitem[{Lucatelli Nunes} and V{\'a}k{\'a}r(2024)]{nunes_vakar_2024_ml}
F.~{Lucatelli Nunes} and Matthijs V{\'a}k{\'a}r.
\newblock {{Automatic differentiation for ML-family languages: Correctness via
  logical relations}}.
\newblock \emph{Mathematical Structures in Computer Science}, 34\penalty0
  (8):\penalty0 747--806, 2024.
\newblock \doi{10.1017/S0960129524000215}.

\bibitem[{Lucatelli Nunes} and
  V{\'a}k{\'a}r(2025)]{nunes_monoidal_grothendieck_2024}
F.~{Lucatelli Nunes} and Matthijs V{\'a}k{\'a}r.
\newblock {{Monoidal closure of Grothendieck constructions via
  $\Sigma$-tractable monoidal structures and Dialectica formulas}}.
\newblock \emph{Theory and Applications of Categories}, 44\penalty0
  (35):\penalty0 1153--1217, 2025.
\newblock URL \url{https://www.tac.mta.ca/tac/volumes/44/35/44-35abs.html}.

\bibitem[{Lucatelli Nunes} and
  V{\'a}k{\'a}r(2026)]{nunes_vakar_2024_doubly_inf_distrib}
F.~{Lucatelli Nunes} and Matthijs V{\'a}k{\'a}r.
\newblock {{Free Doubly-Infinitary Distributive Categories are Cartesian
  Closed}}.
\newblock \emph{Applied Categorical Structures}, 34\penalty0 (5):\penalty0 53,
  2026.
\newblock \doi{10.1007/s10485-026-09887-7}.
\newblock URL \url{https://arxiv.org/abs/2403.10447}.

\bibitem[{Lucatelli Nunes} et~al.(2025{\natexlab{a}}){Lucatelli Nunes},
  Plotkin, and V{\'a}k{\'a}r]{nunes_iterative_chad_2025}
F.~{Lucatelli Nunes}, Gordon Plotkin, and Matthijs V{\'a}k{\'a}r.
\newblock {{Unraveling the iterative CHAD}}.
\newblock arXiv:2505.15002, 2025{\natexlab{a}}.
\newblock URL \url{https://arxiv.org/abs/2505.15002v3}.
\newblock Version 3, revised 13 September 2026.

\bibitem[{Lucatelli Nunes} et~al.(2025{\natexlab{b}}){Lucatelli Nunes},
  Prezado, and V{\'a}k{\'a}r]{nunes_extensivity_2025}
F.~{Lucatelli Nunes}, Rui Prezado, and Matthijs V{\'a}k{\'a}r.
\newblock {{Free extensivity via distributivity}}.
\newblock \emph{Portugaliae Mathematica}, 82\penalty0 (1--2):\penalty0
  177--204, 2025{\natexlab{b}}.
\newblock \doi{10.4171/PM/2129}.

\bibitem[{Lucatelli Nunes} et~al.(2026){Lucatelli Nunes}, Simm, and
  V{\'a}k{\'a}r]{nunes_simm_vakar_2026_variants}
F.~{Lucatelli Nunes}, Diogo Simm, and Matthijs V{\'a}k{\'a}r.
\newblock {Simply Typed Reverse-Mode Automatic Differentiation with Variants:
  Denotational Correctness via Idempotent Completion}.
\newblock arXiv:2607.14453, 2026.
\newblock URL \url{https://arxiv.org/abs/2607.14453v3}.

\bibitem[Mac~Lane(1998)]{MacLane1998}
Saunders Mac~Lane.
\newblock \emph{{{Categories for the Working Mathematician}}}, volume~5 of
  \emph{Graduate Texts in Mathematics}.
\newblock Springer, 2 edition, 1998.
\newblock \doi{10.1007/978-1-4757-4721-8}.

\bibitem[Mazza and Pagani(2021)]{DBLP:journals/pacmpl/MazzaP21}
Damiano Mazza and Michele Pagani.
\newblock {{Automatic differentiation in {PCF}}}.
\newblock \emph{Proc. {ACM} Program. Lang.}, 5\penalty0 ({POPL}):\penalty0
  1--27, 2021.
\newblock \doi{10.1145/3434309}.
\newblock URL \url{https://doi.org/10.1145/3434309}.

\bibitem[Moggi(1989)]{moggi1989computational}
E.~Moggi.
\newblock {{Computational lambda-calculus and monads}}.
\newblock In \emph{[1989] Proceedings. Fourth Annual Symposium on Logic in
  Computer Science}, pages 14--23, 1989.
\newblock \doi{10.1109/LICS.1989.39155}.
\newblock URL \url{https://doi.org/10.1109/LICS.1989.39155}.

\bibitem[Moggi(1991)]{moggi1991notions}
Eugenio Moggi.
\newblock {{Notions of computation and monads}}.
\newblock \emph{Information and Computation}, 93\penalty0 (1):\penalty0 55--92,
  1991.
\newblock \doi{10.1016/0890-5401(91)90052-4}.
\newblock URL
  \url{https://www.sciencedirect.com/science/article/pii/0890540191900524}.
\newblock Selections from 1989 IEEE Symposium on Logic in Computer Science.

\bibitem[Paszke et~al.(2019)Paszke, Gross, Massa, Lerer, Bradbury, Chanan,
  Killeen, Lin, Gimelshein, Antiga, et~al.]{paszke2019pytorch}
Adam Paszke, Sam Gross, Francisco Massa, Adam Lerer, James Bradbury, Gregory
  Chanan, Trevor Killeen, Zeming Lin, Natalia Gimelshein, Luca Antiga, et~al.
\newblock {{PyTorch: An Imperative Style, High-Performance Deep Learning
  Library}}.
\newblock In \emph{Advances in Neural Information Processing Systems 32
  (NeurIPS 2019)}, 2019.
\newblock URL
  \url{https://papers.nips.cc/paper/9015-pytorch-an-imperative-style-high-performance-deep-learning-library.pdf}.

\bibitem[Pitts(1996)]{DBLP:journals/iandc/Pitts96}
Andrew~M. Pitts.
\newblock {{Relational Properties of Domains}}.
\newblock \emph{Information and Computation}, 127\penalty0 (2):\penalty0
  66--90, jun 1996.
\newblock \doi{10.1006/inco.1996.0052}.
\newblock URL \url{https://www.cl.cam.ac.uk/~amp12/papers/relpod/relpod.pdf}.

\bibitem[Reynolds(1972)]{reynolds1972definitional}
John~C. Reynolds.
\newblock {{Definitional Interpreters for Higher-Order Programming Languages}}.
\newblock In \emph{Proceedings of the ACM Annual Conference (ACM '72), Volume
  2}, pages 717--740, 1972.
\newblock \doi{10.1145/800194.805852}.

\bibitem[Simm et~al.(2026)Simm, {Lucatelli Nunes}, and
  V{\'a}k{\'a}r]{simm_nunes_vakar_2026_effectful}
Diogo Simm, F.~{Lucatelli Nunes}, and Matthijs V{\'a}k{\'a}r.
\newblock {Backpropagation for Effectful Languages {I}: Finite Probability and
  Discrete Output Algebraic Effects}.
\newblock arXiv:2607.13935, 2026.
\newblock URL \url{https://arxiv.org/abs/2607.13935}.

\bibitem[Smeding and V{\'a}k{\'a}r(2023)]{DBLP:conf/popl/SmedingV23}
Tom Smeding and Matthijs V{\'a}k{\'a}r.
\newblock {{Efficient Dual-Numbers Reverse AD via Well-Known Program
  Transformations}}.
\newblock \emph{Proceedings of the ACM on Programming Languages}, 7\penalty0
  (POPL):\penalty0 1573--1600, 2023.
\newblock \doi{10.1145/3571247}.
\newblock URL \url{https://doi.org/10.1145/3571247}.

\bibitem[Smeding and V{\'a}k{\'a}r(2024)]{DBLP:conf/popl/SmedingV24}
Tom~J. Smeding and Matthijs I.~L. V{\'a}k{\'a}r.
\newblock {{Efficient CHAD}}.
\newblock \emph{Proceedings of the ACM on Programming Languages}, 8\penalty0
  (POPL):\penalty0 1060--1088, 2024.
\newblock \doi{10.1145/3632878}.
\newblock URL \url{https://doi.org/10.1145/3632878}.

\bibitem[Smeding and V{\'a}k{\'a}r(2025)]{smeding2022parallel}
Tom~J. Smeding and Matthijs I.~L. V{\'a}k{\'a}r.
\newblock {{Parallel dual-numbers reverse AD}}.
\newblock \emph{Journal of Functional Programming}, 35:\penalty0 e16, 2025.
\newblock \doi{10.1017/S0956796825100051}.
\newblock URL \url{https://arxiv.org/abs/2207.03418}.

\bibitem[V{\'a}k{\'a}r(2021)]{DBLP:conf/esop/Vakar21}
Matthijs V{\'a}k{\'a}r.
\newblock {{Reverse AD at Higher Types: Pure, Principled and Denotationally
  Correct}}.
\newblock In Nobuko Yoshida, editor, \emph{Programming Languages and Systems},
  pages 607--634, Cham, 2021. Springer International Publishing.
\newblock \doi{10.1007/978-3-030-72019-3_22}.
\newblock URL \url{https://doi.org/10.1007/978-3-030-72019-3\_22}.

\end{thebibliography}
\end{document}